\newcommand\blfootnote[1]{%
  \begingroup
  \renewcommand\thefootnote{}\footnote{#1}%
  \addtocounter{footnote}{-1}%
  \endgroup
}
\newcommand{\doi}[1]{\url{http://dx.doi.org/#1}}
\newcommand{\SB}{\{\,}%
\newcommand{\SM}{\;{:}\;}%
\newcommand{\SE}{\,\}}%
\newcommand{\Card}[1]{|#1|}
\let\phi=\varphi
\let\epsilon=\varepsilon 
\def\hy{\hbox{-}\nobreak\hskip0pt} 
\newcommand{\parameter}[1]{\text{\normalfont{\sffamily #1}}}
\newcommand{\sol}{\parameter{sol}}
\newcommand{\sCSPD}{\text{\normalfont \#CSPD}}
\newcommand{\specialfont}[1]{{\normalfont\slshape #1}}
\newcommand{\ghtw}{\text{\specialfont{ghtw}}}
\newcommand{\ghtws}{\text{ghtw}}
\newcommand{\htw}{\text{\specialfont{htw}}}
\newtheorem{theorem}{Theorem}
\newtheorem{proposition}{Proposition}
\newtheorem{lemma}{Lemma}
\newcommand{\ctw}[1]{load-\ensuremath{#1} treewidth}
\newcommand{\ctd}[1]{load-\ensuremath{#1} tree decomposition}
\newcommand{\chtw}[1]{load-\ensuremath{#1} hypertree width}
\newcommand{\chtd}[1]{load-\ensuremath{#1} hypertree decomposition}
\newcommand{\colored}{loaded}
\newcommand{\black}{light}
\newcommand{\red}{heavy}
\newcommand{\cc}[1]{{\mbox{\textnormal{\textsf{#1}}}}\xspace}  %
\newcommand{\NP}{\cc{NP}}
\newcommand{\FPT}{\cc{FPT}}
\newcommand{\XP}{\cc{XP}}
\newcommand{\W}{{\cc{W}}}
\newcommand{\III}{{\mathcal{I}}}
\newcommand{\TTT}{{\mathcal{T}}}
\newcommand{\DDD}{{\mathcal{D}}}
\newcommand{\CSP}{\textsc{CSP}}
\newcommand{\Nat}{\mathbb{N}}
\newcommand{\bigoh}{\mathcal{O}}
\newcommand{\algtwxo}{\texttt{TW-X-Obl}}
\newcommand{\algtwxw}{\texttt{TW-X-W$\shortrightarrow$L}}
\newcommand{\algtwxl}{\texttt{TW-X-L$\shortrightarrow$W}}
\newcommand{\algtwho}{\texttt{TW-H-Obl}}
\newcommand{\algtwhw}{\texttt{TW-H-W$\shortrightarrow$L}}
\newcommand{\algtwhl}{\texttt{TW-H-L$\shortrightarrow$W}}
\newcommand{\alghtxo}{\texttt{HT-X-Obl}}
\newcommand{\alghtxw}{\texttt{HT-X-W$\shortrightarrow$L}}
\newcommand{\alghtxl}{\texttt{HT-X-L$\shortrightarrow$W}}
\newcommand{\alghtho}{\texttt{HT-H-Obl}}
\newcommand{\alghthw}{\texttt{HT-H-W$\shortrightarrow$L}}
\newcommand{\alghthl}{\texttt{HT-H-L$\shortrightarrow$W}}
\newcommand{\alghtgo}{\texttt{HT-G-Obl}}
\newcommand{\alghtgw}{\texttt{HT-G-W$\rightarrow$L}}
\author[1]{Andr\'e Schidler}
\author[1]{Robert Ganian}
\author[1,2]{Manuel Sorge}
\author[1]{Stefan Szeider}
\affil[1]{Algorithms and Complexity Group, TU Wien, Favoritenstrasse 9-11, 1040 Wien, Austria, \texttt{\{aschidler,rganian,manuel.sorge,sz\}@ac.tuwien.ac.at}}
\affil[2]{Faculty of Mathematics, Informatics and Mechanics, University of Warsaw, ul. Banacha 2, 02-097 Warsaw, Poland}
\title{Threshold Treewidth and Hypertree Width}
\begin{document}

\maketitle

\begin{abstract}
Treewidth and hypertree width have proven to be highly successful
structural parameters in the context of the Constraint Satisfaction
Problem (CSP). When either of these
parameters is bounded by a constant, then  CSP becomes solvable in polynomial time. However, here the order of the polynomial in the running time depends on the width, and this is known to be unavoidable; therefore, the problem is not fixed-parameter tractable parameterized by either of these width measures. Here we introduce an enhancement of tree and hypertree width through a novel notion of thresholds, allowing the associated decompositions to take into account information about the computational costs associated with solving the given CSP instance. Aside from introducing these notions, we obtain efficient theoretical as well as empirical algorithms for computing threshold treewidth and hypertree width and show that these parameters give rise to fixed-parameter algorithms for CSP as well as other, more general problems. We complement our theoretical results with experimental evaluations in terms of heuristics as well as exact methods based on SAT/SMT encodings.
\blfootnote{Preliminary and shortened versions of the results presented in this submission appeared in the proceedings of IJCAI 2020~\cite{GanianSSS20}. This article expands the exposition of that version by providing full proofs, detailed explanations especially including a more in-depth discussion of the applications of threshold treewidth, and an expanded experimental section. This article appeared in the \emph{Journal of Artificial Intelligence Research}~\cite{GanianSSS22}.}
\end{abstract}

\section{Introduction}

The utilization of structural properties of problem instances is a key approach to tractability of otherwise intractable problems such as Constraint Satisfaction, Sum-of-Products, and other hard problems that arise in AI applications~\cite{Dechter99,GottlobPichlerWei10,GottlobLeoneScarcello02}.
The idea is to represent the instance by a (hyper)graph and to exploit its decomposability to guide dynamic programming methods for solving the problem.
This way, one can give runtime guarantees in terms of the decomposition width.
The most successful width measures for graphs and hypergraphs are treewidth and hypertree width, respectively~\cite{GottlobGrecoScarcello14}.

\paragraph{Treewidth}
The Constraint Satisfaction Problem (CSP) can be solved
in time $d^k \cdot n^{\bigoh(1)}$ for instances whose primal graph has
$n$ vertices, treewidth $k$, and whose variables range over a domain of size~$d$~\cite{Dechter99,Freuder82}.  If~$d$ is a constant, then this
running time gives rise to fixed-parameter tractability w.r.t.\ the
parameter treewidth \cite{GottlobScarcelloSideri02}. However, without
such a constant bound on the domain size, it is known that CSP is
$\W[1]$\hy hard \cite{SamerSzeider10a} and hence not fixed-parameter
tractable.

In the first part of this paper, we propose a new framework that allows fixed-parameter tractability even if some variables range over large (though finite) domains.
The idea is to exploit tree decompositions with the special property that each decomposition bag contains only a few (say, at most $c$) such high-domain variables whose domain size exceeds a given threshold~$d$.
This results in a new parameter for CSP that we call the threshold-$d$ \ctw{c}.
We show that finding such tree decompositions is approximable to within a factor of $(c + 1)$ in fixed-parameter time, employing a replacement method which allows us to utilize state-of-the-art algorithms for computing treewidth such as Bodlaender \emph{et al.}'s approximation~\cite{BodlaenderDDFLP16}.
We then show that for any fixed $c$ and~$d$, CSP parameterized by threshold-$d$ \ctw{c} is fixed-parameter tractable, and that the same tractability result can be lifted to other highly versatile problems such as CSP with Default Values~\cite{GanianKimSlivovskySzeider18,GanianKimSlivovskySzeider21}, Valued CSP~\cite{SchiexFV95,Zivny12}, and the Integer Programming (IP) problem~\cite{Schrijver99}.

\paragraph{Hypertree width}
Bounding the treewidth of a CSP instance automatically bounds the
arity of its constraints. More general structural restrictions that
admit large-arity constraints can be formulated in terms of the
hypertree width of the constraint hypergraph. It is known that for any
constant~$k$, hypertree decompositions of width at most $k$ can be
found in polynomial time, and that CSP instances of hypertree width
$k$ can be solved in polynomial time. If $k$ is a parameter and not
constant, then both problems become $\W[1]$\hy hard and hence not
fixed-parameter tractable. We show that also in the context of
hypertree width, a more fine-grained parameter, which we call
\emph{threshold-$d$ \chtw{c}}, can be used to achieve
fixed-parameter tractability.  Here we distinguish between \red\ and \black\ hyperedges, where a hyperedge is light if the corresponding
constraint is defined by a constraint relation that contains at most
$d$ tuples. Each bag of a threshold-$d$ \chtd{c}{}
of width $k$ must admit an edge cover that consists of at most $k$
hyperedges, where at most $c$ of them are \red. We show that for any
fixed $c$ and~$k$, we can determine for a given hypergraph in
polynomial time whether it admits a hypertree decomposition of width
$k$ where the cover for each bag consists of at most $c$ \red\
hyperedges\footnote{This is not fixed-parameter tractable for
  parameter $k$, as already without the $c$ restriction, the problem
  is $\W[2]$\hy hard.}. We further show that for any fixed $c$ and
$d$, given a width-$k$ threshold-$d$ \chtd{c}\ of
a CSP instance, checking its satisfiability is fixed-parameter
tractable when parameterized by the width~$k$.

\paragraph{Practical algorithms and experiments}
The most popular practical algorithms for finding treewidth and
hypertree decompositions are based on characterizations in terms of
\emph{elimination orderings}. We show how these characterizations can
be extended to capture threshold treewidth and threshold hypertree
width.  These then allow us to obtain practical
algorithms that we test on large sets of graphs and hypergraphs
originating from real-world applications. In particular, we propose
and test several variants of the well-known min-degree heuristics, as
well as exact methods based on SMT-encodings for computing threshold 
tree and hypertree decompositions.  Our experimental findings are
significant, as they show that by optimizing decompositions towards
low load values we can obtain in many cases decompositions that are
expected to perform much better in the dynamic programming phase than
ordinary decompositions that are oblivious to the weight of vertices
or hyperedges.

\paragraph{Related work}
There are several reports on approaches for tuning
greedy treewidth heuristics to improve the performance of particular
dynamic programming (DP) algorithms.
For instance, \citet{JegouT17} considered computing tree decompositions whose bags induce connected subgraphs in order to speed up solution methods whose running time depends on the connected components induced by bags. 
\citet{KaskGelfandDechter11} optimized the state space of graphical
models for probabilistic reasoning, which corresponds in our setting
to minimizing the product of the domain sizes of variables that appear
together in a bag.
Similar heuristics were suggested by \citet{BachooreBodlaender07} for treewidth.
\citet{AbseherMusliuWoltran17} optimized heuristic tree decompositions
w.r.t.\ the sizes of DP tables when solving individual combinatorial
problems such as 3-Colorability or Minimum Dominating Set.
\citet{ScarcelloGrecoLeone07} presented a general framework for
minimizing the weight of hypertree decompositions of bounded width.
We discuss in Sections~\ref{section:tw} and~\ref{section:htw} how the above notions give rise to complexity
parameters for CSP and how they compare to threshold treewidth and
hypertree width.

\paragraph{Outline}
We give the basic definitions and notation in Section~\ref{sec:prelim}.
In Section~\ref{section:tw} we formally introduce the notion of threshold-$d$ \ctw{c} and give results on computing the associated decompositions.
In Section~\ref{sec:tw-appl} we give applications of these new notions to further prominent problems different from CSP in the AI context.
In Section~\ref{section:htw} we then introduce threshold-$d$ \chtw{c} and give results on computing the associated decompositions.
In Section~\ref{sec:elim-order} we give alternative characterizations of the threshold treewidth and hypertree width notions via so-called elimination orderings which we use in our experiments.
The algorithms we implemented are described in Section~\ref{sec:practical} and in Section~\ref{sec:exp} we report on the empirical results.
Section~\ref{sec:concl} contains a conclusion.

\section{Preliminaries}\label{sec:prelim}

For an integer $i$, we let $[i]=\{1,2,\dots,i\}$ and $[i]_0=[i]\cup\{0\}$.
We let $\Nat$ be the set of natural numbers, and $\Nat_0$ the set
$\Nat \cup \{0\}$. 
We refer to %
\citet{Diestel12} for
standard graph terminology.

Similarly to graphs, a \emph{hypergraph} $H$ is a pair $(V,E)$ where $V$ or
$V(H)$ is its vertex set and $E$ or $E(H)\subseteq 2^V$ is its set of hyperedges.
An \emph{edge cover} of $S\subseteq V$ (in the hypergraph $(V,E)$) is a set $F\subseteq E$ such that for every $v\in S$ there is some $e\in F$ with $v\in e$. The \emph{size} of an
edge cover is its cardinality. For a (hyper)graph $G$, we will sometimes use $V(G)$ to denote its vertex set and $E(G)$ to denote the set of its (hyper)edges.

\paragraph{Parameterized complexity}
In parameterized
algorithmics~\cite{DowneyFellows13,Niedermeier06,CyganFKLMPPS15,FlumGrohe06}, the
running-time of an algorithm is studied with respect to a parameter
$k\in\Nat_0$ and input size~$n$. The basic idea is to find a parameter
that describes the structure of the instance such that the
combinatorial explosion can be confined to this parameter. In this
respect, the most favorable complexity class is \FPT
(\textit{fixed-parameter tractable}), which contains all problems that
can be decided by an algorithm running in time $f(k)\cdot
n^{\bigoh(1)}$, where $f$ is a computable function. Algorithms with
this running-time are called \emph{fixed-parameter algorithms}. 
A less
favorable outcome is an \XP{} \emph{algorithm}, which is an algorithm
running in time $\bigoh(n^{f(k)})$; problems admitting such
algorithms belong to the class \XP.
Problems hard for the complexity classes $\W[1]$, $\W[2]$, \dots, $\W[P]$ do not admit fixed-parameter algorithms (even though they might be in \XP) under standard complexity assumptions.

\begin{figure}[tb]
  \centering
  \includegraphics[width=\textwidth]{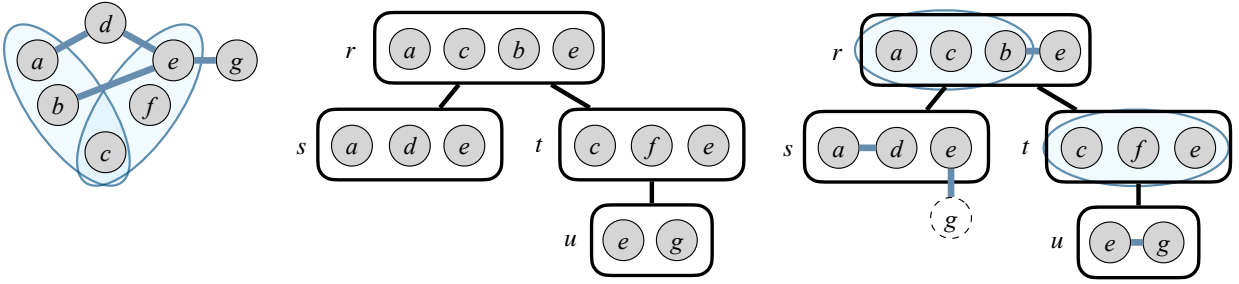}
  \caption{Left: a hypergraph $H$. Middle: a tree decomposition of $H$
    of width 3. Right: a hypertree decomposition of $H$ of width 2; the covers of the bags are indicated by the blue edges and blue encircled vertex sets.
    Observe that the hypertree decomposition satisfies the Special
    Condition: the bag at node $s$ is the only bag whose edge cover
    uses an edge containing a vertex from outside the bag (the edge
    $\{e,g\}\in \lambda(s)$ contains the vertex $g$ outside $\chi(s)$).
    However, as $s$ has no descendants, the Special Condition is
    trivially satisfied.}
  \label{fig:htw}
\end{figure}

\paragraph{Treewidth}
A \emph{tree decomposition}~$\mathcal{T}$ of a (hyper)graph $G$ is a pair 
$(T,\chi)$, where $T$ is a tree and $\chi$ is a function that assigns each 
tree node $t$ a set $\chi(t) \subseteq V(G)$ of vertices such that the following 
conditions hold:
\begin{compactenum}[(P1)]
\item[(P1)] For every (hyper)edge $e\in E(G)$ there is a tree node
  $t$ such that $e\subseteq \chi(t)$.
\item[(P2)] For every vertex $v \in V(G)$,
  the set of tree nodes $t$ with $v\in \chi(t)$ induces a non-empty subtree of~$T$.
\end{compactenum}
The sets $\chi(t)$ are called \emph{bags} of the decomposition~$\mathcal{T}$, and $\chi(t)$ 
is the bag associated with the tree node~$t$. 
The \emph{width} of a tree decomposition $(T,\chi)$ is the size of a largest bag minus~$1$.
The \emph{treewidth} of a (hyper)graph $G$,
denoted by $\textup{tw}(G)$, is the minimum
width over all tree decompositions of~$G$. 
\paragraph{Hypertree width}
\looseness=-1
A \emph{generalized hypertree decomposition} of a hypergraph $H$ is a triple
$\DDD=(T,\chi,\lambda)$ where $(T,\chi)$ is a tree decomposition of
$H$ and $\lambda$ is function mapping each $t\in V(T)$ to an
edge cover $\lambda(t)\subseteq E(H)$ of $\chi(t)$. The
\emph{width} of $\DDD$ is the size of a largest edge cover
$\lambda(t)$ over all $t\in V(T)$, and the generalized hypertree width
$\ghtw(H)$ of $H$ is the smallest width over all generalized hypertree
decompositions of~$H$.  

It is known to be \NP-hard to decide whether a
given hypergraph has generalized hypertree width $\leq 2$
\cite{FischlGottlobPichler18}. To make the recognition of
hypergraphs of bounded width tractable, one needs to strengthen the
definition of generalized hypertree width by adding a further
restriction. A \emph{hypertree decomposition}~\cite{GottlobLeoneScarcello02} of
$H$ is a generalized hypertree decomposition $\DDD=(T,\chi,\lambda)$
of $H$ where $T$ is a rooted tree that satisfies in addition to (P1) and (P2)
also the following  Special Condition~(P3):
\begin{compactenum}[(P3)]
\item[(P3)] If $t,t'\in V(T)$ are nodes in $T$ such that $t'$ is
  a descendant\footnote{A \emph{descendant} of a node $t$ in a tree $T$ is any node $t'$ on a path from $t$ to a leaf of~$T$ in the subtree rooted at~$t$.} of $t$, then for each $e\in \lambda(t)$ we have
$(e\setminus \chi(t))\cap \chi(t')=\emptyset$.
\end{compactenum}
The \emph{hypertree width} $\htw(H)$ of $H$ is the smallest width over
all hypertree decompositions of~$H$.

To avoid trivial cases, we consider only hypergraphs $H=(V,E)$ where
each $v\in V$ is contained in at least one $e\in E$. Consequently,
every considered hypergraph $H$ has an edge cover, and the parameters
$\ghtw(H)$ and $\htw(H)$ are always defined. If $\Card{V}=1$ then
$\htw(H)=\ghtw(H)=1$.

Figure~\ref{fig:htw} shows a hypergraph, a tree decomposition, and a
hypertree decomposition.

\paragraph{The constraint satisfaction problem}
\looseness=-1
An instance of a \emph{constraint satisfaction problem} (CSP) $\III$
is a triple $(V,D,C)$ consisting of a finite set $V$ of variables, a function $D$ which maps each variable $v\in V$ to a set (called the \emph{domain} of $v$), and
a set $C$ of constraints.
A \emph{constraint} 
$c \in C$ consists of a \emph{scope},
denoted by $S(c)$, which is a completely ordered subset of
$V$, and a relation, denoted by $R(c)$, which is a
$|S(c)|$-ary relation on $\Nat$.
If not stated otherwise, we assume that for each scope there is at most one constraint with that scope.
The \emph{size} of an instance $\III$ is $|\III|=|V|+|D|+\sum_{c\in C}|S(c)|\cdot|R(c)|$.

An \emph{assignment} is a mapping $\theta : V \rightarrow \Nat$ which maps each variable $v\in V$ to an element of $D(v)$; a partial assignment is defined analogously, but for $V'\subseteq V$. A constraint $c\in C$ with scope $S(c)=(v_1,\dotsc,v_{|S(c)|})$ is satisfied by a partial assignment $\theta$ if $R(c)$ contains the tuple $\theta(S(c))=(\theta(v_1),\dotsc,\theta(v_{|S(c)|}))$. An assignment is a \emph{solution} if it satisfies all constraints in $\III$. The task in \textsc{CSP} is to decide whether the instance $\III$ has at least one solution.

The \emph{primal graph} $G_\III$ of a \textsc{CSP} instance $\III=(V,D,C)$ is the graph whose vertex set is $V$ and where two vertices~$v,w$ are adjacent if and only if there exists a constraint whose scope contains both $v$ and $w$. The \emph{hypergraph $H_\III$} of $\III$ is the hypergraph with vertex set~$V$, where there is a hyperedge $E\subseteq V$ if and only if there exists a constraint with scope~$E$.
Note that the hypergraph does not contain parallel edges as for each scope there is at most one constraint with that scope.

\section{Threshold Treewidth}\label{section:tw}

The aim of this section is to define threshold treewidth for
\textsc{CSP}, but to do that we first need to introduce a refinement
of treewidth on graphs. Let $G$ be a graph where $V$ is bipartitioned
into a set of \black\ vertices and a set of \red\
vertices; we call such graphs \emph{\colored}. For
$c\in \Nat_0$, a \emph{\ctd{c}} of $G$ is a tree decomposition
of $G$ such that each bag $\chi(t)$ contains at most $c$ \red\ 
vertices. It is worth noting that, while every graph admits a
tree decomposition, for each fixed $c$ there are \colored\ graphs which do
not admit any \ctd{c}\ (consider, e.g., a complete graph
on $c+1$ \red\ vertices). The \emph{\ctw{c}} of $G$ is the minimum width
of a \ctd{c}\ of $G$ or $\infty$ if no such decomposition exists.

Let $d,c\in \Nat_0$ and $\III=(V,D,C)$ be a \textsc{CSP}
instance. Moreover, let $G^d_\III$ be the primal graph such that $v\in
V$ is \black\ if and only if $|D(v)|\leq d$. Then the
\emph{threshold-$d$ \ctw{c}} of $\III$ is the \ctw{c} of $G^d_\III$. 
The following theorem summarizes the  key advantage of using
the threshold-$d$ \ctw{c} instead of the ``standard'' treewidth of
$G_\III$.

\begin{theorem}
\label{thm:usingctw}
Given $d,c\in \Nat$, a \textsc{CSP} instance $\III$ and a \ctd{c}\ of $G^d_\III$ of width $k$, it is possible to solve $\III$ in time at most $\bigoh(d^{k + 1}\cdot |\III|^{c+2})$.
\end{theorem}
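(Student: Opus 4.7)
The plan is to adapt the classical dynamic-programming (DP) algorithm for CSP on tree decompositions to the load-$c$ setting, exploiting the fact that only a few variables per bag have large domains.

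First, I would convert the given \ctd{c}\ of $G^d_\III$ into a \emph{nice} tree decomposition (with leaf, introduce, forget, and join nodes) of width at most $k$, load at most $c$, and $\bigoh(|V|)$ nodes via the standard construction; this step preserves both the width and the load, since it only duplicates bags and removes vertices one at a time.

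Next, I would root the tree and compute, for every node $t$ in a bottom-up pass, the table $A_t$ of partial assignments $\theta : \chi(t) \to \Nat$ with $\theta(v) \in D(v)$ for each $v \in \chi(t)$ that extend to an assignment of the variables appearing in the subtree rooted at $t$ satisfying every constraint whose scope lies in that subtree. The updates are standard: at an \emph{introduce} node one extends each partial assignment of the child by each value of the new variable's domain and discards those violating any constraint whose scope has now become contained in the bag (by (P1), every constraint is tested at some introduce node); at a \emph{forget} node one projects away the forgotten variable; at a \emph{join} node one intersects the two children's tables. The instance has a solution if and only if the table at the root is non-empty.

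The crucial step is bounding $|A_t|$. Each bag $\chi(t)$ contains at most $k+1$ variables, at most $c$ of which are heavy ($|D(v)| > d$) and the remaining at most $k+1-c$ of which are light ($|D(v)| \le d$). Using $|D(v)| \le |\III|$ (which follows from the definition of the instance size), the number of partial assignments to enumerate per bag is at most $d^{k+1-c}\cdot |\III|^c \le d^{k+1}\cdot |\III|^c$. Each of the $\bigoh(|\III|)$ DP updates costs $\bigoh(|\III|)$ per table entry (lookup of at most $|C| \le |\III|$ constraints, or intersection via hashing on the shared bag), giving the overall bound $\bigoh(d^{k+1}\cdot |\III|^{c+2})$.

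The main obstacle is the treatment of heavy variables: the standard treewidth-based DP assumes all domains have size at most $d$, producing the familiar $d^{k+1}$ factor. Here a single heavy variable can have domain size $|\III|$. The key observation that makes everything go through is that the load-$c$ restriction bounds the number of heavy variables per bag by $c$, so their contribution to the table size is only $|\III|^c$ rather than $|\III|^{k+1}$; the remainder of the argument is the routine CSP-on-tree-decomposition machinery.
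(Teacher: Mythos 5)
Your proof is correct and follows essentially the same route as the paper's: a bottom-up dynamic program over the given decomposition whose table size per bag is bounded by $d^{k+1-c}\cdot|\III|^{c}$ thanks to the load-$c$ restriction on heavy variables. The only cosmetic difference is that you first pass to a nice tree decomposition (correctly noting that this preserves both width and load), whereas the paper runs the DP directly on the given decomposition with a consistency check against the children's tables.
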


\newcommand{\cons}{q}

\begin{proof}
  The proof follows by applying the classical algorithm for solving
  \textsc{CSP} by using the treewidth of the primal graph
  $G_\III$~\cite{Freuder82,GottlobScarcelloSideri02}, whereas the
  stated runtime follows from the bound on high-domain variables
  imposed by the definition of \ctw{c}. However, since the proof idea
  is also used in the subsequent Propositions~\ref{prop:application-weighted-csp} to~\ref{prop:application-IP}, we provide a
  full description of the algorithm below for completeness.

Let $\TTT=(T,\chi)$ be the \ctd{c}\ of $G^d_\III$ provided on the input.
Choose an arbitrary node $t$ of $T$ and denote it as the root $r$. Let $V_t=\SB v\in V \SM v\in\chi(t) \vee($ there is a child $t'$ of $t$ such that $v\in \chi(t')~)\SE$. Moreover, let a \emph{$t$-mapping} be a mapping that assigns to each variable $v$ in $\chi(t)$ a value from $D(v)$. It is easy to see that the number of $t$-mappings is upper-bounded by $d^{k-c + 1}\cdot |\III|^{c}$.

The algorithm proceeds by computing, for each node $t$ in a leaf-to-root fashion, the set $M(t)$ of all $t$-mappings with the following property: $\theta\in M(t)$ if and only if there exists an extension $\theta'$ of $\theta$ to $V_t$ such that each constraint $\cons$ with $S(\cons)\subseteq V_t$ is satisfied by $\theta'$. Clearly, $\III$ is a YES-instance if and only if $M(r)$ is non-empty; moreover, if we correctly compute a non-empty $M(r)$ by leaf-to-root dynamic programming, then it is possible to reconstruct a solution for $\III$ by retracing the steps of the dynamic program in a standard fashion.

To compute $M(\ell)$ for a leaf $\ell$, it suffices to loop over all $\ell$-mappings and for each perform a brute-force check to determine whether all of the relevant constraints are satisfied.
For a non-leaf node $t$, we also loop over all $t$-mappings, whereas for each $t$-mapping $\theta$ we first check whether each constraint $c$ such that $S(c)\subseteq \chi(t)$ is satisfied; if not, we discard $\theta$. If yes, we then check that $\theta$ is ``consistent'' with each of the children of $t$---notably, for each child $t'$ of $t$, we ensure that there is at least one $t'$-mapping $\theta'$ such that $\forall v\in \chi(t)\cap \chi(t'): \theta'(v)=\theta(v)$.\footnote{This check can be carried out in amortized constant time via suitable data structures if all $t$-mappings are ordered based on a fixed variable ordering.} If this is the case then we add $\theta$ to $M(t)$. 

Correctness follows by the observation that each constraint $c$ such that $S(c)\subseteq V_t$ must be contained in a bag of at least one descendant of $t$, and hence each such constraint is checked against $\theta$ by transitivity. The runtime bound follows by the upper bound on $V(T)$ and the upper bound on the number of $t$-mappings for each node $t$. 
\end{proof}

We now briefly discuss the relation between threshold-$d$ \ctw{c}\ and other parameters of CSP instances related to treewidth and domain size.
First, \citet{BachooreBodlaender07} introduced a parameter called weighted treewidth.
Consider a graph~$G$ with vertex-weight function~$w \colon V(G) \to \Nat$.
The \emph{weighted width} of a tree decomposition $(T, \chi)$ of $G$ is $\max_{t \in V(T)}\Pi_{v \in \chi(t)}w(v)$, and the minimum such quantity is the \emph{weighted treewidth} of~$G$.
The \emph{weighted treewidth} of a CSP instance is the weighted treewidth of its primal graph with weight function~$w$ defined as $w(v) = |D(v)|$ for each variable~$v$.
It is not hard to see that we can replace the given \ctd{c}\ in Theorem~\ref{thm:usingctw} by a tree decomposition minimizing the weighted treewidth, say the minimum is~$w$, and the algorithm would run in $\bigoh(w \cdot |\III|^2)$ time.
However, %
the weighted treewidth implicitly upper-bounds the domains of \emph{all} variables.
This is not the case for \ctw{c}, which allows each bag to contain up to $c$ variables of arbitrarily large domains. Thus, \ctw{c}\ can be thought of as a more general parameter, that is, fixed-parameter algorithms for it apply to a larger set of instances.

Another way of dealing with variables with large domain would be to replace each of these variables~$v$ in every constraint by $\lceil\log |D(v)| \rceil$ \emph{representative} variables with domain size two.
Since the representative variables occur together in a constraint, they induce a clique in the primal graph.
Computing a tree decomposition of low width for the primal graph thus roughly corresponds to minimizing the number of high-domain variables in a bag.
More precisely, it corresponds to minimizing the sum of the logarithms of the domain sizes of the high-domain variables in the bags.
Similarly to weighted treewidth, this means that the (maximum) domain size is in a strong relation with the width.
In comparison, the approach taken here is aimed at restricting the number of high-domain variables that occur together in a bag.

To apply Theorem~\ref{thm:usingctw} it is necessary to be
able to compute a \ctd{c} of a \colored\ graph efficiently. While there is a significant body of literature on computing or approximating optimal-width tree decompositions of a given graph, it is not obvious how to directly enforce a bound on the number of \red\ vertices per bag in any of the known state-of-the-art algorithms for the problem.
Our next aim is to show that in spite of this, it is possible to reduce the problem of computing an approximate \ctd{c}\ to the problem of computing an optimal-width tree decomposition of a graph. This then allows us to use known results in order to find a sufficiently good approximation of \ctw{c}. 

\begin{lemma}%
\label{lem:approxctw}
\looseness=-1
Given an $n$-vertex \colored\ graph $G$ with $m$ edges and an integer $k \geq 1$, it is possible to compute in $\bigoh((n+m)\cdot k^2)$ time a graph $G'$ such that: %
\begin{inparaenum}  
\item[\emph{(1)}] If $G$ has \ctw{c} $k$ then $G'$ has treewidth at most $ck+k$, and 
\item[\emph{(2)}] given a tree decomposition of width $\ell$ of $G'$, in linear time we can compute a \ctd{(\ell/(k + 1))} of $G$ of width~$\ell$.
\end{inparaenum}
\end{lemma}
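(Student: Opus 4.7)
The plan is to use a blow-up construction: obtain $G'$ from $G$ by replacing each \red\ vertex $v$ of $G$ by a fresh $(k+1)$-clique $V_v$ on new vertices, and extending the adjacencies so that neighbors of $v$ in $G$ become neighbors of \emph{every} vertex of $V_v$ in $G'$. Concretely, for every edge $uv\in E(G)$: if both endpoints are \black, keep the edge; if $u$ is \black\ and $v$ is \red, add edges from $u$ to each vertex of $V_v$; if both $u,v$ are \red, add a complete bipartite graph between $V_u$ and $V_v$. Intuitively, each \red\ vertex of $G$ is blown up into $k+1$ vertices in $G'$, so its cost in any bag becomes $k+1$; thus bounding the (ordinary) treewidth of $G'$ will force the number of \red\ vertices per bag of $G$ to be limited. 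Since $|V(G')|\le (k+1)n$ and $|E(G')|=\bigoh((n+m)(k+1)^2)$, the construction can be carried out in $\bigoh((n+m)k^2)$ time.

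For part~(1), I would start from a \ctd{c}\ $(T,\chi)$ of $G$ of width $k$ and define $\chi'(t)$ by leaving \black\ vertices of $\chi(t)$ in place and substituting each \red\ $v\in\chi(t)$ by the whole set $V_v$. Properties (P1) and (P2) for $G'$ transfer directly from $(T,\chi)$: every edge of $G'$ either lies inside some $V_v$ (covered by any bag containing $v$) or connects $V_u$ to $V_v$ for an edge $uv\in E(G)$ (covered by any bag containing both $u$ and $v$), and the subtree of each vertex in $V_v$ coincides with the subtree of $v$. A bag with $a$ \black\ and $b\le c$ \red\ vertices (so $a+b\le k+1$) becomes a bag of size $a + b(k+1) \le (k+1) + ck$, yielding width at most $ck+k$.

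For part~(2), given a tree decomposition $(T,\chi)$ of $G'$ of width $\ell$, I would define
\[
\chi'(t) \;=\; \bigl(\chi(t)\cap V_{\mathrm{light}}(G)\bigr) \,\cup\, \bigl\{\,v\text{ \red}\SM V_v \subseteq \chi(t)\,\bigr\}.
\]
The main obstacle here is verifying (P1): for every edge $uv$ of $G$ with a \red\ endpoint, the construction was arranged precisely so that $\{u\}\cup V_v$ (when $u$ is \black) or $V_u\cup V_v$ (when both are \red) is a clique in $G'$, hence by the classical clique-containment property of tree decompositions some bag of $(T,\chi)$ contains the entire clique, which by definition of $\chi'$ then contains $u$ and $v$. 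Property (P2) is immediate from $\{t\SM v\in\chi'(t)\}=\bigcap_{w\in V_v}\{t\SM w\in\chi(t)\}$, a non-empty intersection of subtrees (non-emptiness again follows from the clique property applied to $V_v$). Finally, if $\chi'(t)$ contains $a$ \black\ vertices and $h$ \red\ ones, then by definition $\chi(t)$ contains $a$ \black\ vertices and all $h(k+1)$ vertices of the corresponding blow-ups, so $a+h(k+1)\le \ell+1$; this simultaneously bounds $|\chi'(t)|=a+h\le \ell+1$ (width at most $\ell$) and $h\le \ell/(k+1)$ (the required load). The chief technical hurdle is thus the backward direction, and the key insight is that lifting every $G$-edge to a full clique in $G'$ lets the clique-containment property of tree decompositions carry all the edge-coverage work.
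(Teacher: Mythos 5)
Your construction of $G'$ is exactly the one used in the paper (each \red\ vertex is blown up into a $(k+1)$-clique of images, with every edge of $G$ lifted to a complete bipartite connection between the corresponding vertex sets), and both directions are argued the same way: part (1) by substituting each \red\ vertex of a bag by all of its images, and part (2) by using the clique-containment property of tree decompositions to recover (P1) and (P2) for $G$ and the $(k+1)$-fold blow-up to bound the load. The only difference is organizational --- the paper first normalizes the given decomposition of $G'$ to a ``discrete'' one (every bag contains all or none of the images of each \red\ vertex) before projecting back to $G$, whereas you project directly; this is a valid streamlining, and your final load bound $h\le(\ell+1)/(k+1)$ has the same harmless off-by-one slack at the boundary as the paper's own computation.
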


\begin{proof}
Consider the graph $G'$ constructed as follows: 
\begin{inparaenum}[(a)]
\item we add each \black\ vertex in $G$ into~$G'$;
\item for each \red\ vertex $v\in V(G)$, we add $k+1$ vertices $v_0, v_1,\dots, v_k$ into $G'$ (we call them \emph{images} of $v$);
\item we add an edge between each pair of images, say $v_i,v_j\in V(G')$, of some vertex $v$;
\item for each $vw\in E(G)$, we add into $G'$ the edge $vw$ (if both $v$ and $w$ are \black), or the edges $\SB vw_i \SM i\in [k]_0\SE$ (if $w$ was \red\ and $v$ was \black), or the edges $\SB v_iw_j \SM i,j\in [k]_0\SE$ (if both $v$ and $w$ were \red).
\end{inparaenum}

Clearly, $G'$ can be constructed from $G$ in time $\bigoh((n + m)\cdot k^2)$. For the part~(1) of the lemma, consider a minimum-width tree decomposition $\mathcal{T}=(T,\chi)$ of $G$. Now consider the mapping $\chi'$ that is obtained from $\chi$ by replacing each occurrence of a \red\ vertex $v$ by all of its images, i.e., $v_0,\dots,v_k$---formally, $x\in \chi'(t\in V(T))$ if and only if either $x\in \chi(t)$, or there exists $v\in V(G)$ such that $v\in \chi(t)$ and $x=v_i$. Since the number of \red\ vertices in a single bag was upper-bounded by~$c$, the maximum size of an image of $\chi'$ is $(k+1)\cdot c+k+1-c=ck+k+1$. It is easy to verify that $(T,\chi')$ is a tree decomposition of $G'$, and so the first claim follows.

\looseness=-1 For part~(2) of the lemma, call a tree decomposition $\mathcal{T'}=(T,\chi')$ of $G'$ \emph{discrete} if for each $v \in V(G)$ such that $v$ is heavy and each $t \in V(T)$ it holds that either for all $i \in [k]_0$ we have $v_i \notin \chi'(t)$ or for all $i \in [k]_0$ we have $v_i \in \chi'(t)$.
Let $\widehat{\mathcal{T}}=(T,\widehat{\chi})$ be a tree decomposition of $G'$ of width at most~$\ell$.
We first claim that in linear time we can compute a tree decomposition $\mathcal{T'}=(T,\chi')$ of $G'$ that is discrete and of width at most~$\ell$.
To do this, we compute $\chi'$ from $\widehat{\chi}$ as follows.
We iterate over all $t \in V(T)$ and for each vertex in $\widehat{\chi}(t)$ we check whether it is the image of some \red\ vertex $v \in V(G)$ and, if so, we check whether all images of $v$ are contained in $\widehat{\chi}(t)$.
If not all images of $v$ are contained in $\widehat{\chi}(t)$ we remove from $\widehat{\chi}(t)$ all images of $v$.
In this way we obtain a mapping $\chi'$.
Note that, for each $t$, the above computation can be done in $\bigoh(|\widehat{\chi}(t)|)$ time as follows.
First, iterate over $\widehat{\chi}(t)$, obtaining a list of \red\ vertices which have images in $\widehat{\chi}(t)$.
For each such vertex~$v$, initialize an empty list of images in $\widehat{\chi}(t)$.
Iterate over $\widehat{\chi}(t)$ again to fill the lists of images with pointers to the images in $\widehat{\chi}(t)$.
Finally, compute the length of each list and, if it is shorter than $k + 1$, remove all images from $\widehat{\chi}(t)$ using the pointers.
Thus, $(T, \chi')$ can be computed in linear time.

Next, we argue that $(T, \chi')$ is a tree decomposition of~$G'$.
Consider first condition~(P2) of tree decompositions.
Clearly, (P2) holds for every vertex $v$ which is not an image of a \red\ vertex.
For the sake of contradiction, assume that (P2) is violated for an image $v_j$, $j \in [k]_0$, of some \red\ vertex~$v \in V(G)$.
Thus, there are $r, s, t \in V(T)$ such that $s$ is on the unique path between $r$ and $t$ in $T$, $v_j \in \chi'(r)$, $v_j \in \chi'(t)$, and $v_j \notin \chi'(s)$.
Observe that both $\widehat{\chi}(r)$ and $\widehat{\chi}(t)$ contain all images of $v$ whereas there is an image $v_i$ of $v$ which is not contained in~$\widehat{\chi}(s)$.
Hence, (P2) is violated for $(T, \widehat{\chi})$ and vertex $v_i$, a contradiction.

Now consider condition~(P1).
Clearly, (P1) holds for each edge whose endpoints either both are images of a \red\ vertex of $G$ or both are not images of \red\ vertex of~$G$.
For the sake of contradiction, assume that (P1) does not hold for an edge such that one endpoint, $u$, is not the image of a \red\ vertex and one endpoint, $v_j$ for some $j \in [k]_0$, is the image of a \red\ vertex~$v \in V(G)$.
Since the images of $v$ induce a clique in $G$, there is a node $t \in V(T)$ such that $\widehat{\chi}(t)$ contains all images of~$v$.\footnote{This is a well-known fact about cliques and tree decompositions and can be proved roughly as follows: The vertices in the clique induce subtrees of the decomposition tree whose vertex sets have pairwise nonempty intersection. Since the trees are subtrees of  the decomposition tree, this means there is a vertex in the decomposition tree that is contained in all of the subtrees.}  
By assumption on $u$, we have $u \notin \chi'(t)$ and thus $u \notin \widehat{\chi}(t)$.
There is thus an edge $e$ in $T$ whose removal separates $T$ into a connected component that contains~$t$ and a connected component that contains all $t' \in T$ such that $u \in \chi'(t')$.
Moreover, there is such an edge $e$ such that one endpoint, $s$, has the property that $u \in \chi'(s)$.
Since $u$ is adjacent to each $v_i \in V(G')$, $i \in [k]_0$, for each $i \in [k]_0$ there is $r_i \in V(T)$ such that both $u, v_i \in \widehat{\chi}(r_i)$.
By (P2) of $(T, \widehat{\chi})$, for each $i \in [k]_0$, the subtree of $T$ induced by the nodes~$r \in V(T)$ with $v_i \in \widehat{\chi}(r)$ contains~$e$.
Thus, $\widehat{\chi}(s)$ contains each $v_i$.
By construction of $\chi'$ it follows that $\chi'(s)$ contains each $v_i$.
This is a contradiction to the fact that $\chi'(s)$ contains $u$ and to the assumption that there is no bag of $(T, \chi')$ that contains both $u$ and $v_j$.
Thus, (P2) holds for $(T, \chi')$.

Above we have shown that the discrete tree decomposition $(T, \chi')$ of $G'$ of width $\ell$ can be computed in linear time.
Next, let us compute the mapping $\chi$ from $\chi'$ as follows:
For each $t \in V(T)$, we put $v\in \chi(t)$ if either $v\in \chi'(t)$ or there exists $j\in [k]_0$ such that $v_j\in \chi'(t)$. Since in this way each vertex in a bag $\chi'(t)$ can only lead to the addition of at most one vertex into $\chi(t)$, it is easy to see that the maximum size of an image of $\chi$ is $\ell+1$.
Hence, if $(T, \chi)$ is a tree decomposition, then its width is at most~$\ell$.

We claim that the load of $(T, \chi)$ is at most $\ell/(k + 1)$.
Otherwise, there would be some $t \in V(T)$ such that $\chi(t)$ contains $\ell/(k + 1) + 1$ \red\ vertices.
In that case, by discreteness of $\chi'$, the number of vertices in $\chi'(t)$ is at least $(k + 1) \cdot (\ell/(k + 1) + 1) = \ell + k + 1 > \ell + 1$.
This contradicts the fact that $(T, \chi')$ has width $\ell$.

It remains to show that $(T,\chi)$ is a tree decomposition of $G$. 
Condition (P1) clearly holds for every edge $vw\in E(G)$ such that $vw\in E(G')$. On the other hand, if $vw\not \in E(G')$ then either one or both of $v,w$ are \red\ in $G$, and hence, e.g., the vertices $v_0$ and $w_0$ are adjacent in $G'$. This implies that there is some node $t'\in V(T)$ such that $\{v_0,w_0\}\subseteq \chi'(t')$, and by construction we obtain $\{v,w\}\subseteq \chi(t')$---hence (P1) holds. Finally, assume that (P2) is violated. Since it is easy to see that each vertex in $V(G)$ will be contained in at least one image of $\chi$, this means that there would be some $v\in V(G)$ and nodes $t,t_a,t_b\in V(T)$ such that:
\begin{compactitem}
\item $v\not \in \chi(t)$ but $v\in \chi(t_a)$ and $v\in \chi(t_b)$;
\item $t$ separates $t_a$ from $t_b$ in $T$.
\end{compactitem}

If $v$ is \black, then this would immediately violate the fact that $\mathcal{T'}$ is a tree decomposition of $G'$. On the other hand, if $v$ is \red, then there would have to exist $v_i$ and $v_j$ such that $v_i\in \chi'(t_a)$ and $v_j\in \chi'(t_b)$; moreover, $v_i\neq v_j$ since otherwise we would once again contradict (P2) for $\mathcal{T'}$. But then by construction we know that $v_iv_j\in E(G')$.
Thus, by (P1) there is a bag $t' \in V(T)$ for which $v_i, v_j \in \chi'(t')$.
By (P2) there is a path in~$T$ from $t_a$ (resp.\ from $t_b$) to $t'$ on which each bag $s$ has $v_i \in \chi(s)$ (resp.\ $v_j \in \chi(s)$).
One of these paths contains $t$ and thus $v \in \chi(t)$, a contradiction.
Hence (P2) holds as well, completing the proof.
\end{proof}

Lemma~\ref{lem:approxctw} and the algorithm of \citet{Bodlaender96} can be used to approximate \ctw{c}:

\begin{theorem}
\label{thm:ctwapprox}
Given $c\in \Nat$, a \colored\ graph $G$ and $k\in \Nat$, in $(ck)^{\bigoh((ck)^3)}\cdot |V(G)|$ time it is possible to either correctly determine that the \ctw{c} of $G$ is at least $k+1$ or to output a $(ck + k)$-width \ctd{c}\ of $G$ with $\bigoh(|V(G)|)$ nodes.
\end{theorem}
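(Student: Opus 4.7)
The plan is to combine Lemma~\ref{lem:approxctw} with Bodlaender's exact treewidth algorithm~\cite{Bodlaender96} in a black-box manner. First, I would run the reduction from Lemma~\ref{lem:approxctw} on the input loaded graph $G$ with the given $k$, producing in $\bigoh((n+m)\cdot k^2)$ time an ordinary (uncolored) graph $G'$. By part~(1) of the lemma, if the load-$c$ treewidth of $G$ is at most $k$, then the treewidth of $G'$ is at most $ck+k$. Thus it suffices to decide treewidth on $G'$ with target $ck+k$.

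Next, I would invoke Bodlaender's algorithm on $G'$ with parameter $ck+k$. The algorithm either reports that $G'$ has treewidth strictly greater than $ck+k$, or returns a tree decomposition of $G'$ of width $\ell\leq ck+k$ whose number of nodes is linear in $|V(G')|$. In the former case, the contrapositive of part~(1) of Lemma~\ref{lem:approxctw} lets us correctly declare that the load-$c$ treewidth of $G$ exceeds $k$. In the latter case, I would apply the construction from part~(2) of the lemma in linear time to turn the width-$\ell$ decomposition of $G'$ into a load-$\lfloor \ell/(k+1)\rfloor$ tree decomposition of $G$ of width at most $\ell\leq ck+k$. Because $\ell\leq ck+k<(c+1)(k+1)$, the load is at most $c$, so the output is a load-$c$ \ctd{c}{} of $G$ of width $ck+k$, as required.

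For the runtime, note that $|V(G')|\leq (k+1)\cdot |V(G)|$ and $|E(G')| = \bigoh(k^2(n+m))$. Bodlaender's algorithm runs in $2^{\bigoh(w^3)}\cdot |V(G')|$ time when called with parameter $w$; plugging in $w=ck+k$ and $|V(G')|=\bigoh(k\cdot |V(G)|)$ gives the claimed $(ck)^{\bigoh((ck)^3)}\cdot |V(G)|$ bound after absorbing the polynomial factors of $k$ and the preprocessing cost from Lemma~\ref{lem:approxctw}. The main subtlety I expect is the bookkeeping on the node count of the final decomposition: Bodlaender's algorithm produces a decomposition on $\bigoh(|V(G')|)=\bigoh(k\cdot |V(G)|)$ nodes, and the transformation of part~(2) preserves this. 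To meet the stated $\bigoh(|V(G)|)$ bound, I would add a standard normalization pass that contracts tree edges between adjacent bags with comparable content (equivalently, reducing to a nice tree decomposition and then compressing), which runs in linear time and yields a decomposition of the same width on $\bigoh(|V(G)|)$ nodes. Apart from this accounting, the argument is a direct composition of the two cited ingredients, so no further obstacles arise.
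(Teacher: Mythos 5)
Your proposal is correct and follows essentially the same route as the paper's proof: construct $G'$ via Lemma~\ref{lem:approxctw}, run Bodlaender's linear-time fixed-parameter algorithm with target width $ck+k$, and translate the result back, with the load bound following from $\ell/(k+1) < c+1$ and integrality. Your extra care about compressing the decomposition of $G'$ (which has $\bigoh(k\cdot|V(G)|)$ nodes) down to $\bigoh(|V(G)|)$ nodes is a detail the paper glosses over, but it is a routine normalization and does not change the argument.
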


\begin{proof}
  First, we construct the graph $G'$ as per Lemma~\ref{lem:approxctw}.
  By that lemma, if $G$ has \ctw{c} at most $k$, then $G'$ has treewidth at most $ck+k=\ell$.
  We then apply the fixed-parameter linear-time algorithm for treewidth of~\citet{Bodlaender96} to compute a tree decomposition of width at most $\ell$, or correctly determine that no such tree decomposition exists---in which case we output ``NO''.
  Applying this algorithm takes $\ell^{\bigoh(\ell^3)}\cdot |V(G)|$~time (see also \citet{BodlaenderDDFLP16}).
  If the output is NO, then the \ctw{c} of $G$ is at least~$k + 1$, as required.
  If a decomposition for $G'$ is found, we translate it back to $G$ using Lemma~\ref{lem:approxctw} and output the result.
  By Lemma~\ref{lem:approxctw} the treewidth of the output decomposition is at most $ck + k$ and the load is at most \[\frac{\ell}{k + 1} = \frac{c(k + 1) + k - c}{k + 1} = c + \frac{k - c}{k + 1}.\]
  Since the load is an integer, it is at most $c$, as claimed.
\end{proof}

By constructing the graph $G^d_\III$ and then computing a \ctd{c}\ of $G^d_\III$ with width at most $ck + k$ using Theorem~\ref{thm:ctwapprox}, in combination with Theorem~\ref{thm:usingctw}, we obtain:

\newcommand\solvetime{\ensuremath{d^{ck + k + 1} \cdot |\III|^{c+2} + (ck)^{\bigoh((ck)^3)} \cdot |\III|}}
\begin{theorem}
  \label{thm:ctwmain}
  Given $c, d \in \Nat$, and a \textsc{CSP} instance $\III$, %
  we can solve $\III$ in \solvetime\
  time where $k$ is the threshold\nobreakdash-$d$ \ctw{c} of $\III$.
  Thus, for constant $c$ and $d$, \textsc{CSP} is fixed-parameter tractable parameterized by $k$.
\end{theorem}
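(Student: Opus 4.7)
The plan is to chain Theorem~\ref{thm:ctwapprox} with Theorem~\ref{thm:usingctw}. First, I would convert the CSP instance $\III=(V,D,C)$ into the \colored\ graph $G^d_\III$: build the primal graph by iterating through each constraint $c\in C$ and adding all pairs of variables in $S(c)$ as edges, then mark each vertex $v\in V$ as \red\ iff $|D(v)|>d$. This takes $\bigoh(|\III|)$ time and yields an $n$-vertex graph with $n\leq |V|\leq |\III|$.

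Since the statement promises a runtime in terms of the (unknown) true threshold\nobreakdash-$d$ \ctw{c} parameter $k$, I would invoke Theorem~\ref{thm:ctwapprox} with guessed values $k'=1,2,\dots$ in succession, stopping at the smallest $k'$ for which the algorithm returns a decomposition rather than ``NO''. By the correctness half of Theorem~\ref{thm:ctwapprox}, this loop terminates with $k'\leq k$, producing a \ctd{c}\ $\TTT=(T,\chi)$ of $G^d_\III$ of width at most $ck'+k'\leq ck+k$ with $\bigoh(|V(G^d_\III)|)=\bigoh(|\III|)$ nodes. The total cost spent on all failed and successful guesses is dominated by a geometric sum and remains $(ck)^{\bigoh((ck)^3)}\cdot |\III|$.

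Having $\TTT$, I feed it directly into Theorem~\ref{thm:usingctw}. Since the width of $\TTT$ is at most $ck+k$ and the per-bag \red\ vertex count is at most $c$, that theorem solves $\III$ in time $\bigoh(d^{ck+k+1}\cdot |\III|^{c+2})$. Adding the two time bounds yields the claimed total runtime
\[
d^{ck+k+1}\cdot |\III|^{c+2}+(ck)^{\bigoh((ck)^3)}\cdot |\III|.
\]

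For the second part, fix $c$ and $d$ as constants. Then the exponent $c+2$ on $|\III|$ becomes a constant, and the remaining dependence on $k$ is encapsulated in a computable function $f(k)=d^{(c+1)k+1}+(ck)^{\bigoh((ck)^3)}$, so the total runtime matches the form $f(k)\cdot |\III|^{\bigoh(1)}$ required for fixed-parameter tractability. I do not anticipate a serious obstacle: the only subtlety is that $k$ is not known a priori, which is handled by the iterative-guessing wrapper around Theorem~\ref{thm:ctwapprox}; the rest of the argument is a direct composition of the two previously established theorems.
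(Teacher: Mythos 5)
Your proposal is correct and follows essentially the same route as the paper: construct $G^d_\III$, obtain a width-$(ck+k)$ \ctd{c}\ via Theorem~\ref{thm:ctwapprox}, and conclude with Theorem~\ref{thm:usingctw}. The only difference is that you make explicit the iteration over guesses $k'=1,2,\dots$ to cope with $k$ being unknown (which the paper leaves implicit), and this is handled correctly since the cost of all guesses is dominated by the final one.
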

\begin{proof}
  The algorithm is as follows.
  We first construct the graph $G^d_\III$.
  Since $\III$ has threshold\nobreakdash-$d$ \ctw{c} at most~$k$, the maximum number of variables in a constraint is at most~$k + 1$.
  Thus, $G^d_\III$ can be computed in $O(k^2 \cdot |\III|)$ time by initializing an empty graph with a vertex for each variable of $\III$, marking the vertices as \red\ that correspond to variables with domain size more than~$d$, and then iterating over all constraints and adding the corresponding edges.
  Then, we compute a \ctd{c}\ of $G^d_\III$ with width at most $ck + k$ using Theorem~\ref{thm:ctwapprox}.
  This takes $ck^{\bigoh((ck)^3)} \cdot k^2 \cdot |\III|$ time.
  The result then follows from Theorem~\ref{thm:usingctw}.
\end{proof}

Note that the runtime bound stated in
Theorem~\ref{thm:ctwmain} would allow us to take the threshold $d$ as
an additional parameter instead of a constant, to still establish
fixed-parameter tractability of CSP, parameterized by $k+d$.
\smallskip

\section{Further Applications of Threshold Treewidth}\label{sec:tw-appl}
While our exposition here focuses primarily on applications for the classical constraint satisfaction problem, it is worth noting that \ctw{c} can be applied analogously on many other prominent problems that arise in the AI context. In this subsection, we outline three such applications of our machinery in highly general settings.

\subsection*{Weighted Constraint Satisfaction with Default Values}
Our first application concerns a recently introduced extension of constraint satisfaction via a combination of weights and default values~\cite{BraultbaronCapelliMengel15,GanianKimSlivovskySzeider21} (see also the published preprint by~\citet{GanianKimSlivovskySzeider18}). This extension captures, among others, counting CSP (\textsc{\#CSP}) and counting SAT (\textsc{\#SAT}).
We introduce the extension below by building on our
preliminaries on \textsc{CSP}. 

For a variable set $V$ and a domain $D$, a \emph{weighted constraint $C$ of arity $\rho$ over $D$ with default value $\eta$} (or ``weighted constraint'' in brief) is a tuple $C=(S,F,f,\eta)$
where 
\begin{itemize}
\item the \emph{scope} $S=(x_1,\dots,x_\rho)$ is a sequence of variables from $V$, 
\item $\eta\in \mathbb{Q}$ is a rational number called the \emph{default value}, 
\item $F\subseteq D^{\,\rho}$ is called the \emph{support}, and
\item $f:F \rightarrow \mathbb{Q}$ is a mapping 
which assigns rational weights to the support.
\end{itemize}

A weighted constraint $c=(S,F,f,\eta)$ naturally induces a total function on assignments of its scope $S=(x_1,\dots,x_\rho)$: for each assignment $\alpha:X\rightarrow D$ where $X\supseteq S$, we define the \emph{value}~$c(\alpha)$ of~$c$ under~$\alpha$ as $c(\alpha)=f(\alpha(x_1),\dots, \alpha(x_\rho))$ if $(\alpha(x_1),\dots, \alpha(x_\rho))\in F$ and $c(\alpha)=\eta$ otherwise. 

Similarly to \CSP, an instance $\III$ of \textsc{Weighted Constraint Satisfaction with Default Values} (\sCSPD) is a tuple $(V,D,C)$, but here $C$ is a set of weighted constraints. The task in \sCSPD\ is to compute the total weight of all assignments of $V$, i.e., to compute the value $\sol(\III)=\sum_{\alpha:V\rightarrow D} \prod_{c\in C}c(\alpha)$.

\sCSPD\ was shown to be fixed-parameter tractable when parameterized by the treewidth of the primal graph plus $|D|$~\cite{GanianKimSlivovskySzeider18}, in particular as a corollary of a more general dynamic programming algorithm $\mathbb{A}$~\cite[Theorem 1]{GanianKimSlivovskySzeider18}. When $\mathbb{A}$ is applied on the primal graph, it proceeds in a leaf-to-root fashion that is similar in nature to the algorithm described in the proof of Theorem~\ref{thm:usingctw} here; however, formally the records stored by $\mathbb{A}$ are more elaborate. In particular, at each node~$t$ of a provided tree decomposition, $\mathbb{A}$ stores one record for each pair $(\theta,\vec{B})$ where 
\begin{itemize}
\item $\theta$ is an assignment of the vertices in $\chi(t)$, and 
\item $\vec{B}$ is a tuple that specifies for each constraint that is ``processed'' at $t$ the subset of tuples in the support that agree with~$\theta$.
\end{itemize}

Crucially, when applying $\mathbb{A}$ on the primal graph, in every tuple $(\theta,\vec{B})$ the latter component $\vec{B}$ is fully determined by the former component. And since the number of possible choices for $\theta$ is upper-bounded by $d^k\cdot |\III|^{c+2}$ for the same reason as in Theorem~\ref{thm:usingctw}, we obtain:

\begin{proposition}\label{prop:application-weighted-csp}
Given $c, d \in \Nat$, and an instance $\III$ of $\sCSPD$ it is possible to solve $\III$ in $\solvetime$ time where $k$ is the threshold-$d$ \ctw{c} of~$\III$.
In particular, for constant~$c$ and $d$ $\sCSPD$ is fixed-parameter tractable parameterized by $k$.
\end{proposition}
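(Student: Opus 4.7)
The plan is to invoke the fixed-parameter algorithm $\mathbb{A}$ of \citet{GanianKimSlivovskySzeider18} as a black box, exploiting the structural observation already flagged in the excerpt: when $\mathbb{A}$ runs on a tree decomposition of the primal graph, every record it maintains at a node $t$ is a pair $(\theta,\vec{B})$ in which the support-projection component $\vec{B}$ is completely determined by the assignment component $\theta$ restricted to $\chi(t)$. Consequently, the running time of $\mathbb{A}$ at node $t$ is governed, up to a polynomial factor, by the number of partial assignments of $\chi(t)$ rather than by $|D|^{|\chi(t)|}$ in general.

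First I would preprocess $\III$ by constructing the \colored\ primal graph $G^d_\III$ (marking a vertex \red\ iff its domain exceeds $d$), and then applying Theorem~\ref{thm:ctwapprox} to compute a \ctd{c}\ of width at most $ck+k$ in time $(ck)^{\bigoh((ck)^3)}\cdot |\III|$; if the algorithm reports NO, then the threshold-$d$ \ctw{c}\ of $\III$ exceeds $k$ and we are done. Each bag of the resulting decomposition contains at most $ck+k+1$ variables, of which at most $c$ are \red\ (domain at most $|\III|$) and the rest are \black\ (domain at most $d$). Hence the number of partial assignments per bag is bounded by $d^{ck+k+1-c}\cdot |\III|^c \le d^{ck+k+1}\cdot |\III|^c$, exactly the bag-count bound used in the proof of Theorem~\ref{thm:usingctw}.

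Next I would run $\mathbb{A}$ on this decomposition. By the observation above, the total number of records at each node is bounded by $d^{ck+k+1}\cdot |\III|^c$, and each record can be processed in time polynomial in $|\III|$ (for combining children and for reading off constraint weights and defaults); absorbing the $\bigoh(|\III|)$ bound on the number of decomposition nodes and per-node polynomial overhead into two additional factors of $|\III|$, the dynamic programming phase runs in $d^{ck+k+1}\cdot |\III|^{c+2}$ time. Adding the preprocessing bound yields the claimed running time $\solvetime$, and for constant $c$ and $d$ this is of the form $f(k)\cdot |\III|^{\bigoh(1)}$, establishing fixed-parameter tractability.

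The step I expect to require the most care is not really an algorithmic obstacle but a bookkeeping one: verifying that the record-counting argument of \citet{GanianKimSlivovskySzeider18} still yields a per-node factor that depends only on the number of $\theta$'s (and not on the support sizes of the weighted constraints) when the bag sizes are the inflated $ck+k+1$ resulting from the approximate decomposition. Since the structural reason — namely that on the primal graph the component $\vec{B}$ is a function of $\theta$ — is independent of bag size, this reduces to citing the relevant lemma from \cite{GanianKimSlivovskySzeider18} and plugging in the new bag-assignment bound.
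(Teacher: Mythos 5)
Your proposal is correct and follows essentially the same route as the paper: both arguments run the dynamic programming algorithm $\mathbb{A}$ of \citet{GanianKimSlivovskySzeider18} on a \ctd{c}\ of $G^d_\III$ obtained from Theorem~\ref{thm:ctwapprox}, and both rest on the same key observation that, on the primal graph, the component $\vec{B}$ of each record is fully determined by the assignment $\theta$, so the record count reduces to the $t$-mapping count from Theorem~\ref{thm:usingctw}. Your bookkeeping of the per-bag bound $d^{ck+k+1}\cdot|\III|^c$ is, if anything, slightly more explicit than the paper's.
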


\subsection*{Valued Constraint Satisfaction}
The second application is for the \textsc{Valued CSP} (\textsc{VCSP}) \cite{SchiexFV95,Zivny12}.
Herein, we are given the same input as in $\sCSPD$ but where every weighted constraint has a default value of $0$. The goal in \textsc{VCSP} is to compute a variable assignment $\alpha$ that minimizes $\sum_{c \in C}c(\alpha)$. \textsc{VCSP} generalizes \textsc{MaxCSP}, where we aim to find an assignment for a \textsc{CSP} instance that maximizes the number of satisfied constraints.

It is a folklore result that \textsc{VCSP} can be solved by a dynamic programming algorithm along a tree decomposition of the primal graph, yielding \XP-tractability when parameterized by the treewidth of the primal graph~\cite{Carbonnel0Z18,BerteleBrioschibook}. The algorithm can be seen as a slight extension of the one presented in Theorem~\ref{thm:usingctw}: the records $M(t)$ used in the algorithm that keep a list of all assignments $\theta$ are enhanced to also keep track of the value $\sum_{c \subseteq V_t}c(\theta)$. We thus obtain the following.

\begin{proposition}\label{prop:application-vcsp}
Given $c, d \in \Nat$ and an instance $\III$ of \textsc{VCSP} it is possible to solve $\III$ in \solvetime\ time where $k$ is the threshold\nobreakdash-$d$ \ctw{c} of~$\III$.
In particular, for constant~$c$ and $d$ \textsc{VCSP} is fixed-parameter tractable parameterized by $k$.
\end{proposition}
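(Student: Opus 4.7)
The plan is to adapt the leaf-to-root dynamic program from the proof of Theorem~\ref{thm:usingctw} so that, rather than recording which partial assignments extend to some satisfying assignment, each record stores the \emph{minimum} value of $\sum_c c(\theta')$ achievable by extending the current partial assignment into the subtree. The main subtlety relative to the satisfiability setting is that every weighted constraint must contribute to the global sum exactly once, whereas the same constraint can be evaluated in many bags. To handle this cleanly, I would first assign each constraint $c \in C$ to a unique node $\tau(c)$ of the decomposition tree --- for instance, the root-most node whose bag contains $S(c)$, which exists by (P1) --- and declare that $c$ contributes its value only at $\tau(c)$.

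Given a $\ctd{c}$ $(T,\chi)$ of $G^d_\III$ of width at most $ck+k$ produced by Theorem~\ref{thm:ctwapprox}, rooted arbitrarily and equipped with the map $\tau$, the dynamic program processes nodes bottom-up. For each node $t$ and each $t$-mapping $\theta$, the record $M(t)[\theta]$ stores the minimum, over extensions $\theta'$ of $\theta$ to the variables in the subtree at $t$, of $\sum_{c\,:\,\tau(c)\text{ in the subtree at }t} c(\theta')$. Leaves are handled by brute force over all $\ell$-mappings. At an internal node~$t$, we set $M(t)[\theta]$ to $\sum_{c\,:\,\tau(c)=t} c(\theta)$ plus, for each child $t'$, the minimum of $M(t')[\theta']$ over $t'$-mappings $\theta'$ that agree with $\theta$ on $\chi(t)\cap \chi(t')$. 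The final output is $\min_\theta M(r)[\theta]$ at the root~$r$. Correctness is a straightforward induction using the facts that $\tau$ partitions $C$ among the tree nodes (so no constraint is double-counted) and that $S(c)\subseteq \chi(\tau(c))$ (so $c(\theta)$ is determined by the local $\tau(c)$-mapping alone).

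For the runtime, the definition of $\ctw{c}$ applied to the width-$(ck+k)$ decomposition gives at most $c$ \red\ variables and at most $ck+k+1-c$ \black\ variables per bag, so the number of $t$-mappings is bounded exactly as in the proof of Theorem~\ref{thm:usingctw}; summing over $\bigoh(|\III|)$ bags, adding the polynomial overhead of combining children, and including the cost of Theorem~\ref{thm:ctwapprox} yields the stated \solvetime\ bound, which is fixed-parameter in $k$ once $c$ and $d$ are constants.

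The only real obstacle is the double-counting issue --- ensuring that each $c\in C$ contributes to the final sum exactly once despite the fact that $S(c)$ may be contained in many bags along a root-to-leaf path --- and the partition $\tau$ resolves this in an algorithmically trivial way. Beyond that, every step parallels the satisfiability algorithm of Theorem~\ref{thm:usingctw} almost verbatim, with ``is satisfied'' replaced by ``contributes its value'' and Boolean conjunction replaced by addition followed by taking a minimum.
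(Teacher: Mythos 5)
Your proposal is correct and follows essentially the same route as the paper, which simply extends the records $M(t)$ of the dynamic program from Theorem~\ref{thm:usingctw} to track the minimum value of $\sum_{c} c(\theta')$ over constraints whose scopes lie in $V_t$; your device of charging each constraint to a unique node $\tau(c)$ is just one standard way of making the paper's (sketched) accounting precise. The runtime analysis, including the invocation of Theorem~\ref{thm:ctwapprox} to obtain the width-$(ck+k)$, load-$c$ decomposition, matches the paper's as well.
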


\subsection*{Integer Programming}
Our third application concerns \textsc{Integer Programming}
(\textsc{IP})~\cite{Schrijver99}, the generalization of the famous \textsc{Integer Linear Programming} problem to arbitrary polynomials. \textsc{IP} is, in fact, \emph{undecidable} in general; see \citet{Koppe12} for a survey on its complexity.
However, when there are explicit bounds on the variable domains, it can be solved by a fixed-parameter algorithm via dynamic programming on tree decompositions.

For our presentation, we provide a streamlined definition of \textsc{IP} with domain bounds as used, e.g., by \citet{EibenGKO19}. An instance of \textsc{IP} consists of a tuple $(X,\mathcal{F},\beta,\gamma)$ where: 
\begin{itemize}
\item $X=\{x_1,\dots,x_n\}$ is a set of variables,
\item $\mathcal{F}$ is a set of integer polynomial inequalities over variables in $X$, that is, inequalities of the form $p(x_1, \ldots, x_n) \leq 0$ where $p$ is a sparsely encoded polynomial with rational coefficients,%
\item $\beta$ is a mapping from variables in $X$ to their domain, i.e., $\beta(x)$ is the set of all integers $z$ such that $x\mapsto z$ satisfies all constraints in $\mathcal{F}$ over precisely the variable $x$ (these are often called \emph{box constraints}), and
\item $\gamma$ is an integer polynomial over variables in $X$ called the evaluation function.
\end{itemize}

The goal in \textsc{IP} is to find an assignment $\alpha$ of the variables of $\III$ which (1) satisfies all inequalities in $\mathcal{F}$ and $\beta$ while achieving the maximum value of $\gamma$.

Let $d=\max_{x\in X}|\beta(x)|$, and let the primal graph $G_\III$ of an \textsc{IP} instance $\III$ be the graph whose vertex set is $X$ and where two variables are adjacent if and only if there exists an inequality in $\mathcal{F}$ containing both variables. It is known that \textsc{IP} is fixed-parameter tractable when parameterized by $d$ plus the treewidth of $G_\III$~\cite{EibenGKO19}. The algorithm $\mathbb{B}$ used to establish this result performs leaf-to-root dynamic programming that is analogous in spirit to the procedure used in the proof of Theorem~\ref{thm:usingctw}. Herein in particular, at each node $t$ algorithm~$\mathbb{B}$ stores records which specify the most favorable ``partial evaluation'' of $\gamma$ for each possible assignment of variables in $\chi(t)$ in view of $\beta$ and $\mathcal{F}$.

Since each variable is equipped with a domain via $\beta$, we may define the graph $G_\III^d$ in an analogous way as for \textsc{CSP}. Once that is done, it is not difficult to verify that running the algorithm of \citet{EibenGKO19} on a threshold-$d$ \ctd{c}\ of $G_\III^d$ guarantees a runtime bound for solving \textsc{IP} of $d^{\bigoh(k)}\cdot |\III|^{c + 2}$.
In combination with our Theorem~\ref{thm:ctwapprox}, we conclude:

\begin{proposition}\label{prop:application-IP}
Given $c, d \in \Nat$, and an instance $\III$ of \textsc{IP} it is possible to solve $\III$ in \solvetime\ time where $k$ is the threshold-$d$ \ctw{c} of~$\III$.
In particular, for constant~$c$ and $d$ \textsc{IP} is fixed-parameter tractable parameterized by $k$.
\end{proposition}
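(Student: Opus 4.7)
The plan is to mimic the template used for Theorem~\ref{thm:ctwmain}, reducing to the known tree-decomposition-based algorithm for \textsc{IP}. First I would build the primal graph $G_\III$ in polynomial time and then mark the \colored\ graph $G_\III^d$ by declaring a variable $x$ to be \red\ exactly when $|\beta(x)| > d$. Since $\III$ has threshold-$d$ \ctw{c}\ at most $k$, the maximum scope size of any inequality in $\mathcal{F}$ is at most $k+1$, so $G_\III^d$ can be assembled in time $\bigoh(k^2 \cdot |\III|)$.

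Next, I would apply Theorem~\ref{thm:ctwapprox} to $G_\III^d$ with parameter $k$, obtaining in time $(ck)^{\bigoh((ck)^3)}\cdot |\III|$ a \ctd{c}\ $\TTT$ of width at most $ck+k$ with $\bigoh(|\III|)$ nodes. The remaining step is to run the Eiben--Ganian--Knop--Ordyniak dynamic programming algorithm $\mathbb{B}$ along $\TTT$. Recall that $\mathbb{B}$ proceeds leaf-to-root and stores, at each node~$t$, one record per assignment of the bag variables $\chi(t)$ consistent with $\beta$ and with those inequalities in $\mathcal{F}$ whose scope lies entirely in the subtree rooted at~$t$, keeping the best partial value of $\gamma$ achievable by any extension to the variables already fully processed.

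The number of records per bag is what drives the runtime, and here the \ctd{c}-structure yields the key savings. At any bag $\chi(t)$ there are at most $c$ \red\ variables (each with domain of size at most $|\III|$) and at most $ck+k+1-c$ \black\ variables (each with domain of size at most $d$), so the number of bag-assignments is bounded by $d^{ck+k+1-c}\cdot |\III|^c \leq d^{ck+k+1}\cdot |\III|^c$. Each record is generated, evaluated, and merged with child records in time polynomial in $|\III|$, using sparse polynomial evaluation on the relevant partial assignment. Multiplied over the $\bigoh(|\III|)$ bags this yields the bound $d^{ck+k+1}\cdot |\III|^{c+2}$ for the dynamic program, plus the additive approximation overhead $(ck)^{\bigoh((ck)^3)}\cdot |\III|$ from Theorem~\ref{thm:ctwapprox}, matching~\solvetime.

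The main obstacle is confirming that $\mathbb{B}$ as stated in~\cite{EibenGKO19} really admits a per-bag record count equal to the number of bag-assignments, rather than a coarser $d^{\bigoh(k)}$ factor charged uniformly over \emph{all} variables irrespective of their actual domain sizes. This is directly analogous to the situation for \sCSPD\ addressed in Proposition~\ref{prop:application-weighted-csp}: it suffices to observe that $\mathbb{B}$'s records are indexed solely by bag-assignments and that, for each such assignment, evaluating $\beta$ and $\mathcal{F}$ and taking a maximum over consistent child records is polynomial in $|\III|$. Fixed-parameter tractability for constant $c$ and $d$ is then immediate from the runtime bound.
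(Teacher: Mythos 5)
Your proposal is correct and follows essentially the same route as the paper: define $G_\III^d$ by marking variables with $|\beta(x)|>d$ as heavy, obtain a width-$(ck+k)$ \ctd{c}\ via Theorem~\ref{thm:ctwapprox}, and observe that the records of the algorithm of \citet{EibenGKO19} are indexed solely by bag-assignments, whose number is bounded by $d^{ck+k+1}\cdot|\III|^{c}$ thanks to the load bound. The paper's own argument is in fact terser than yours (it only asserts the $d^{\bigoh(k)}\cdot|\III|^{c+2}$ bound and combines it with Theorem~\ref{thm:ctwapprox}), so your explicit accounting of the per-bag record count is a faithful elaboration rather than a deviation.
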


\section{Threshold Hypertree Width}\label{section:htw}

In this section, we define threshold hypertree width for \textsc{CSP}, show how to use it to obtain fixed-parameter algorithms, and how to compute the associated decompositions.
Similar to threshold treewidth, we will first introduce an
enhancement of hypertree width for hypergraphs.
Intuitively, the running time of dynamic programs for \textsc{CSP} based on decompositions of the corresponding hypergraph is strongly influenced by constraints, corresponding to hyperedges, whose relations contain many tuples.
We hence aim to distinguish these hyperedges%
.

\looseness=-1
Let $H$ be a hypergraph where $E = E(H)$ is bipartitioned into a set~$E_B$ of \emph{\black} hyperedges and a set~$E_R$ of \emph{\red} hyperedges.
We call such hypergraphs \emph{\colored}.
Let $c \in \Nat_0$.
A \emph{\chtd{c}} of $H$ is a hypertree decomposition $(T, \chi, \lambda)$ for $H$ such that each edge cover $\lambda(v)$, $v \in V(T)$, contains at most $c$ \red\ hyperedges.
The width and the notion of \emph{\chtw{c}} (of $H$) are defined in the same way as for hypertree decompositions.

Similar to threshold treewidth, for each fixed $c$ there are
hypergraphs that do not admit a \chtd{c}. For example, consider a clique graph with at least $c + 2$ vertices with \red\ edges only, interpreted as a hypergraph.
As a \chtd{c}\ contains a tree decomposition for the clique, there is a bag containing all vertices of this clique, and the minimum edge cover for this bag has size~$c + 1$.

\looseness=-1
We now apply the above notions to \textsc{CSP}.
Let $d,c\in \Nat_0$ and $\III=(V,D,C)$ be a \textsc{CSP} instance.
Let $H^d_\III$ be the \colored\ hypergraph of $\III$ wherein a hyperedge $F \in E(H^d_\III)$ is \black\ if and only if $|R(\gamma)|\leq d$, for the constraint~$\gamma \in C$ corresponding to $F$, i.e., $S(\gamma) = F$.
Then, the \emph{threshold-$d$ \chtw{c}} of $\III$ is the
\chtw{c}\ of~$H^d_\III$.
For threshold-$d$ \chtw{c}, we also obtain a fixed-parameter algorithm for CSP.
Instead of building on hypertree decompositions in the above, we may also use generalized hypertree decompositions, leading to the notion of \emph{generalized threshold-$d$ \chtw{c}} and the associated decompositions.

\begin{theorem}\label{thm:usingchtw}
  Given $c, d\in \Nat$, a \textsc{CSP} instance $\III$ with (generalized) threshold-$d$ \chtw{c}~$k$ together with the associated decomposition of~$H^d_\III$, in $\bigoh(d^k\cdot |\III|^{c+2})$ time it is possible to decide~$\III$ and produce a solution if there is one.
\end{theorem}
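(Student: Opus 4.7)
The plan is to carry out a standard leaf-to-root dynamic program along the given (generalized) hypertree decomposition, as is classical for CSP on hypertree decompositions~\cite{GottlobLeoneScarcello02}, but to bound the number of records per bag using the light/heavy partition of the edges rather than the variable domain sizes. At a high level, the algorithm is structurally the same as the one described in the proof of Theorem~\ref{thm:usingctw}, with the key difference that records are generated by iterating over tuple combinations from the edge cover, not over all assignments of the bag.

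First I would root the decomposition $(T,\chi,\lambda)$ at an arbitrary node $r$ and assign each constraint $\gamma\in C$ to a node $t_\gamma$ whose bag contains $S(\gamma)$ (which exists by (P1)). Let $V_t$ denote the union of all bags in the subtree rooted at $t$, and let $M(t)$ consist of every assignment $\theta$ of $\chi(t)$ that extends to an assignment of $V_t$ satisfying every constraint whose scope lies in $V_t$. As is standard, $\III$ is satisfiable iff $M(r)\neq\emptyset$, and a witness solution can be reconstructed by retracing the DP from the root to the leaves.

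The key computational step is filling $M(t)$ efficiently. Rather than enumerating assignments of $\chi(t)$ directly, I would iterate over all tuple-combinations $(\tau_F)_{F\in\lambda(t)}$ with $\tau_F\in R(\gamma_F)$, where $\gamma_F$ is the constraint associated with $F$. Because $\chi(t)\subseteq \bigcup_{F\in\lambda(t)}F$, each combination that is pairwise consistent on shared variables induces a unique candidate assignment $\theta$ of $\chi(t)$ via the projections $\tau_F|_{F\cap\chi(t)}$. Each candidate is then verified by checking (i) that $\theta$ satisfies every constraint assigned to $t$, and (ii) that for every child $t'$ of $t$ there is some $\theta'\in M(t')$ that agrees with $\theta$ on $\chi(t)\cap\chi(t')$; (ii) can be handled in amortized constant time per child using a lookup structure that indexes $M(t')$ by projections onto $\chi(t')\cap\chi(t)$. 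For the runtime, the crucial observation is that the number of tuple-combinations per node is bounded by $\prod_{F\in\lambda(t)}|R(\gamma_F)|\leq d^{k-c}\cdot |\III|^c\leq d^k\cdot |\III|^c$, since $\lambda(t)$ contains at most $c$ heavy edges (each with at most $|\III|$ tuples in its relation) and at most $k-c$ light edges (each with at most $d$ tuples). Assuming $T$ has $\bigoh(|\III|)$ nodes (this can be enforced by contracting redundant nodes) and each candidate is processed in $\bigoh(|\III|)$ time, the total running time is $\bigoh(d^k\cdot|\III|^{c+2})$ as claimed.

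The main subtlety I expect is the generalized case, where edges in $\lambda(t)$ may contain vertices outside $\chi(t)$. This needs care when defining records: $M(t)$ must store assignments of $\chi(t)$ built from projections $\tau_F|_{F\cap\chi(t)}$, not the full tuples $\tau_F$ themselves, since the values on $F\setminus\chi(t)$ are irrelevant to what is passed upward. In the non-generalized case the Special Condition (P3) guarantees that such extra vertices do not reappear in descendants, but (P3) is not actually used in the correctness argument for the DP itself; it is only needed for the independent task of computing such decompositions in polynomial time, and thus plays no role in the runtime analysis here.
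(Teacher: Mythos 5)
Your proposal is correct and follows essentially the same route as the paper's proof: a leaf-to-root dynamic program over the given decomposition in which the records at each node are generated by combining tuples from the (at most $c$ heavy and at most $k-c$ light) relations covering the bag, yielding the same $d^{k-c}\cdot|\III|^{c}$ bound on records per node and the same overall running time. The only cosmetic difference is that the paper bakes the ``derived from cover tuples'' requirement into the definition of $M(t)$ as an explicit second invariant, whereas you impose it only through the enumeration; both treatments are at the same level of rigor as the paper's proof sketch.
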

In particular, for fixed $c, d$, \textsc{CSP} is fixed-parameter tractable parameterized by~$k$ when a threshold-$d$ \chtd{c}\ of width~$k$ is given.
\begin{proof}[Proof Sketch]
  A usual approach used for ordinary hypertree decompositions is to compute an equivalent CSP whose hypergraph is acyclic and then use an algorithm for acyclic CSPs \cite{GottlobLeoneScarcello02}.
  We instead apply a direct dynamic programming approach; the stated running-time bound then follows from the upper bound on constraints with large number of tuples imposed by the definition of \chtw{c}.

  Let $(T, \chi, \lambda)$ be the \chtd{c}\ of $H^d_\III$ provided in the input.
  Root~$T$ arbitrarily and denote the root by~$r$.
  For each $t \in V(T)$, let $V_t = \bigcup_{t'} \chi(t')$, where the union is taken over all~$t'$ in the subtree of~$T$ rooted at~$t$.
  A \emph{$t$\nobreakdash-mapping} is a mapping that assigns to each variable~$v \in \chi(t)$ a value from~$D(v)$.
  
  The algorithm proceeds by dynamic programming, i.e., computing, for each node $t \in V(T)$ in a leaf-to-root fashion, the set~$M(t)$ of all $t$-mappings $\theta$ with the following two properties: (1), there exists some extension $\theta'$ of $\theta$ to $V_t$ which maps each variable $v \in V_t$ to an element of $D(v)$ such that each constraint~$\gamma$ with $S(\gamma) \subseteq V_t$ is satisfied by $\theta'$ and, (2), for each constraint $\gamma \in \lambda(t)$, mapping~$\theta$ projected\footnote{The projection of a relation~$R$ onto a subset $S$ of its variables is the set resulting from taking each tuple of $R$ and removing from this tuple the entries for variables not in~$S$.} onto $S(\gamma)$ occurs as a tuple in $\gamma$~projected onto $\chi(t)$.
  
  Observe that $\III$ is a YES-instance if and only if $M(r) \neq \emptyset$: The backward direction follows from property~(1).
  To see the forward direction, note that any satisfying assignment projected onto $\chi(r)$ is contained in $M(r)$.
  Thus, to decide $\III$ it suffices to compute all sets $M(t)$, $t \in V(T)$.
  The solution, if it exists, can then be computed by retracing the steps of the dynamic program in a standard fashion.

  Before we explain how to compute $M(t)$, consider the following way of constructing a $t$-mapping~$\theta$.
  For each constraint in $\lambda(t)$, pick a tuple such that each pair of picked tuples agree on the variables they share (if any).
  Note that the picked tuples induce a $t$-mapping, and we set $\theta$ to be this mapping.
  Call a $t$-mapping constructed in this way \emph{derived}.
  Note that the number of derived $t$-mappings is at most $d^{k - c} \cdot |\III|^c$ and that the set of all derived $t$-mappings can be computed in $\bigoh(d^{k - c} \cdot |\III|^{c + 1})$ time.

  Next, we explain how to compute~$M(t)$.
  To compute $M(\ell)$ for a leaf~$\ell$, due to property~(2), it suffices to loop over all derived $\ell$-mappings and to put them into $M(\ell)$ if they satisfy all constraints~$\gamma$ for which $S(\gamma) \subseteq \chi(\ell)$.
  By the bound on the number of derived $\ell$-mappings, this takes $\bigoh(d^k \cdot |\III|^{c + 1})$~time.

  Consider an internal node~$t$ of~$T$.
  Again, we loop over each derived $t$-mapping $\theta$ and check whether it satisfies all constraints whose scope is in $\chi(t)$.
  If not, then we discard $\theta$.
  If yes, then for each child $t'$ of $t$ we check whether there is a mapping $\theta' \in M(t')$ such that $\theta$ and $\theta'$ agree on their shared variables; in formulas $\forall v\in \chi(t)\cap \chi(t'): \theta'(v)=\theta(v)$.
  If so, then we put $\theta$ into $M(t)$.
  By using property~(2) of the mappings in $M(t')$, in this way, we correctly compute~$M(t)$. 
  Using suitable data structures and the bound on the number of derived mappings, this computation can be carried out in time at most $\bigoh(d^k \cdot |\III|^{c + 1})$ per node in~$T$.
\end{proof}

Similar to weighted treewidth, a weighted variant of hypertreewidth has been proposed~\cite{ScarcelloGrecoLeone07} wherein the whole decomposition~$(T, \chi, \lambda)$ is weighted according to the estimated running time of running a dynamic program similar to the above.
The approach is, slightly simplified, to weigh each hyperedge in the cover of a bag by $|R(c)|$ for the corresponding constraint~$c$ and then to minimize $\sum_{t \in V(T)} \Pi_{c \in \lambda(t)}|R(c)|$.
A drawback here again is that, using this quantity as a parameter, it implicitly bounds the number of tuples in each constraint $|R(c)|$ and in turn all domain sizes.
This is not the case for threshold-$d$ \chtw{c}.

We now turn to computing the decomposition for the hypergraph of the CSP used in Theorem~\ref{thm:usingchtw}.
A previous approach for computing ordinary hypertree decompositions of width at most~$k$ by first recursively decomposing the input hypergraph via separators which roughly correspond to the vertex sets of the potential covers of the bags, that is, sets~$S$ of at most $k$ hyperedges.
The decomposition can then be determined in a bottom-up fashion~\cite{Gottlob99}.
This approach can be adapted to \chtd{c}\ by replacing the sets~$S$ with sets of at most~$k$ hyperedges among which there are at most~$c$ \red\ hyperedges.
We omit the details.
Indeed, we may instead use a more general framework, due to \citet{ScarcelloGrecoLeone07}, which allows to compute hypertree decompositions of width at most~$k$ that additionally optimize an arbitrary weight function.
Applying this framework leads to the following.

\begin{theorem}\label{thm:computechtw}
  Given $c, k \in \Nat$, and a \colored\ hypergraph~$H$, in $\bigoh(|E(H)|^{2k} \cdot |V(H)|^2)$ time it is possible to compute a \chtd{c}\ for $H$ of width at most~$k$ or correctly report that no such decomposition exists.
\end{theorem}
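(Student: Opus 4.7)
The plan is to invoke the framework of \citet{ScarcelloGrecoLeone07} as a black box. That framework, given a hypergraph together with an arbitrary per-cover cost function, computes in the claimed $\bigoh(|E(H)|^{2k}\cdot|V(H)|^2)$ time a hypertree decomposition of width at most~$k$ minimizing $\sum_{t\in V(T)}\text{cost}(\lambda(t))$, or correctly reports that no width-$k$ hypertree decomposition exists. The task thus reduces to encoding the ``at most $c$ \red\ edges per bag'' restriction as such a cost function.

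Concretely, for each candidate cover $S\subseteq E(H)$ with $|S|\leq k$, define
\[
w(S)=\begin{cases} 0 & \text{if $S$ contains at most $c$ \red\ hyperedges,} \\ 1 & \text{otherwise.}\end{cases}
\]
Run the framework with cost~$w$. If it returns a decomposition $(T,\chi,\lambda)$ with $\sum_{t}w(\lambda(t))=0$, then by definition of $w$ every bag cover uses at most $c$ \red\ hyperedges, so $(T,\chi,\lambda)$ is a valid \chtd{c}\ of width at most~$k$, which we output. Otherwise, we report that no such \chtd{c}\ exists: any \chtd{c}\ of width at most~$k$ would be a width\nobreakdash-$k$ hypertree decomposition attaining objective value~$0$, contradicting the fact that the framework either returned a strictly positive value or nothing at all.

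The main obstacle that I expect to require care is verifying that the framework of~\citet{ScarcelloGrecoLeone07} indeed accepts per-cover cost functions of this generality while preserving the stated bound of $\bigoh(|E(H)|^{2k}\cdot|V(H)|^2)$; since $w$ is evaluable in $\bigoh(k)$ time per cover, this should be unproblematic, but it is the one place where the black-box usage needs to be pinned down. Should the framework be phrased only for a restricted family of weights, I would fall back on directly modifying the bottom-up enumeration algorithm of~\citet{Gottlob99}: at every recursive step, restrict the set of candidate covers of size at most~$k$ to those using at most~$c$ \red\ hyperedges. The correctness argument and runtime analysis for that algorithm survive this restriction essentially verbatim, since the restriction only prunes candidate covers (each testable in $\bigoh(k)$ time), leaving the dominant $\bigoh(|E(H)|^{2k}\cdot|V(H)|^2)$ work of enumerating covers and processing the induced components intact.
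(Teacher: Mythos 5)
Your proposal is correct and follows essentially the same route as the paper: both treat the weighted hypertree decomposition framework of \citet{ScarcelloGrecoLeone07} as a black box equipped with a cheaply evaluable per-bag cost that encodes the bound on \red\ hyperedges (the paper minimizes $\max_{t} p(t)$ where $p(t)$ counts the \red\ edges in $\lambda(t)$ and then compares the optimum to $c$, whereas you minimize a sum of $0/1$ indicators and test for zero --- both are valid instantiations yielding the same time bound). Your fallback of pruning the candidate covers in the bottom-up algorithm of \citet{Gottlob99} is precisely the alternative the paper itself sketches before opting for the more general framework.
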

 
\begin{proof}
  We first state the result of \citet{ScarcelloGrecoLeone07} in a simplified and weaker form that is sufficient for our purpose.
  Let $p$ be a function that assigns an integer to a bag of any hypertree decomposition.
  Let $r_p$ be the function of the running time needed to evaluate~$p$.
  A \emph{tree aggregation function} is a function that assigns to each hypertree decomposition $(T, \chi, \lambda)$ the integer $\max_{t \in V(T)}p(t)$.
  \citeauthor{ScarcelloGrecoLeone07}'s Theorems 4.4 and 4.5 now imply the following.
  There is an algorithm that, given an integer~$k$, a hypergraph~$H$, and a tree aggregation function~$f$, computes a width-$k$ hypertree decomposition for~$H$ that minimizes~$f$, or correctly decides that no such decomposition exists.
  The algorithm runs in $\bigoh(|E(H)|^{2k} \cdot |V(H)| \cdot (|V(H)| + r_p))$ time.\footnote{The running time bound follows from the analysis given by \citet{ScarcelloGrecoLeone07} in the proof of Theorem~4.5.}

  To apply this result to our setting, we put $p$ to be the function that assigns to each bag $t$ the number of \red\ hyperedges in the edge cover $\lambda(t)$.
  Thus, \citeauthor{ScarcelloGrecoLeone07}'s algorithm will compute the smallest $c$ such that there is a \chtd{c}.
  Note that $|\lambda(t)| \leq |V(H)|$ and hence $r_p = \bigoh(|V(H)|)$.
  This implies the running-time bound.
\end{proof}

\noindent Assuming FPT${}\neq{}$\W[2] the running time in Theorem~\ref{thm:computechtw} cannot be improved to a fixed-parameter tractable one, even if $c$ is constant.
This follows from the fact that the special case of deciding whether a given hypergraph without \red\ hyperedges admits a \chtd{0}\ of width at most~$k$ is \W[2]-hard with respect to~$k$ \cite{GottlobGroheMusliuSamerScarcello05}.

Bounding the threshold treewidth or threshold hypertree width of a CSP
instance constitutes a hybrid restriction and not a structural
restriction \cite{CarbonnelCooper15}, as these restrictions are
formulated in terms of the loaded primal graphs and the loaded
hypergraphs, and not in terms of the plain, unlabeled (hyper)graphs.
However, as the loaded (hyper)graphs carry only very little additional
information, we would like to label such restrictions as
\emph{semi-structural}.

\section{Elimination Orderings}%
\label{sec:elim-order}
The algorithms used in our experiments rely on a characterization of treewidth and generalized hypertree width by so-called elimination orders.
An \emph{elimination ordering} $\prec$ of a graph $G$ is a total ordering $\prec$ of $V(G)$. Let us denote the $i$-th vertex in $\prec$ as $v_i$, and let $G_0=G$. For each $i\in [|V(G)|]$, let the graph $G_i$ be obtained from $G_{i-1}$ by removing $v_i$ and adding edges between each pair of vertices in the neighborhood of $v_i$ (i.e., the neighborhood, $N_{G_{i - 1}}(v_i)$, of $v_i$ in $G_{i - 1}$ becomes a clique in~$G_i$). The \emph{width} of $v_i$ w.r.t.\ $\prec$ is then defined as $|N_{G_{i - 1}}(v_i)|$, and the \emph{width} of $\prec$ is the maximum width over all vertices in $G$ w.r.t.\ $\prec$.

It is well known that a graph $G$ has treewidth $k$ if and only if it admits an elimination ordering $\prec$ of width $k$~\cite{Kloks94,BodlaenderK10}.
Moreover, a tree decomposition of width $k$ can be computed from such $\prec$ and, vice-versa, given a tree decomposition of width $k$ one can construct a width-$k$ elimination ordering in polynomial time~\cite{Kloks94,BodlaenderK10}.

Recently, it has been shown that generalized hypertree decompositions
of CSP instances can be characterized in a similar
way~\cite{FichteHecherLodhaSzeider18}. In particular, consider a CSP
instance $\III$ with primal graph $G_\III$ and an elimination ordering
$\prec$ of~$G_\III$. The \emph{cover width} of $v_i$ w.r.t.\
$\prec$ is then defined as the size of a minimum edge cover of
$N_{G_{i - 1}}(v_i) \cup \{v_i\}$ in $H_\III$, and the \emph{cover width} of
$\prec$ is the maximum cover width over all vertices in $G$
w.r.t.\ $\prec$. Analogously as in the treewidth case, a generalized
hypertree decomposition of width $k$ can be computed from an
elimination ordering $\prec$ of cover width $k$, and, vice-versa, given a generalized hypertree decomposition of width $k$ one can construct a cover width-$k$ elimination ordering in polynomial time~\cite{FichteHecherLodhaSzeider18,SchidlerSzeider20}.

It is relatively straightforward to adapt these notions of elimination orderings to describe not only classical treewidth and generalized hypertree width, but also their threshold variants.
In particular, by simply retracing the steps of the original proofs~\cite{Kloks94,FichteHecherLodhaSzeider18}, one can show the following.
Recall that for a \CSP\ instance $\III$ and an integer~$d$, we have defined $G^d_\III$ as the loaded graph obtained from the primal graph~$G_\III$ of $\III$ by marking each vertex $v \in V(G_\III)$ as \black\ if $|D(v)| \leq d$ and \red\ otherwise.
Also, $H^d_\III$ is the loaded hypergraph obtained from the hypergraph~$H_\III$ of $\III$ wherein we mark each hyperedge $F \in E(H_\III)$ as \black\ if $R(\gamma) \leq d$, where $\gamma$ is the constraint corresponding to $F$, and we mark $F$ as \red\ otherwise.
\begin{theorem}\label{the:orderings}
  \looseness=-1
  \textnormal{(1)} A \CSP\ instance $\III$ has threshold-$d$ \ctw{c} $k$ if and only if $G_\III^d$ admits an elimination ordering of width $k$ with the property that for each $v_i$, $N_{G_{\III,i - 1}^d}(v_i) \cup \{v_i\}$ contains at most $c$ \red\ vertices.
  \textnormal{(2)} A \CSP\ instance $\III$ has generalized threshold-$d$ \chtw{c} $k$ if and only if $G_\III$ admits an elimination ordering of cover width~$k$ with the property that for each $v_i$, $N_{G_{\III,i - 1}^d}(v_i) \cup \{v_i\}$ admits a hyperedge cover (in $H^d_\III$) of size at most $k$ containing at most $c$ \red\ hyperedges.
\end{theorem}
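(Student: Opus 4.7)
The plan is to adapt the two standard equivalences, namely the treewidth/elimination-order equivalence~\cite{Kloks94,BodlaenderK10} and the generalized hypertree-width/cover-width equivalence~\cite{FichteHecherLodhaSzeider18}, and to verify in each of the four directions that the load condition is preserved by the constructions used in those proofs.

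For part~(1), forward direction, I would start from a \ctd{c}\ $(T,\chi)$ of $G^d_\III$ of width $k$ and build an elimination ordering by repeatedly peeling leaves. Root $T$ arbitrarily, pick a leaf $t$, and as long as $\chi(t)$ contains a vertex $v$ that does not appear in the parent bag (or $t$ is the root), output $v$ next in $\prec$ and delete $v$ from $T$ (removing $t$ if $\chi(t)$ becomes empty). The standard argument shows $N_{G_{i-1}}(v_i)\subseteq \chi(t)\setminus\{v_i\}$ at the moment $v_i$ is eliminated, so $N_{G_{\III,i-1}^d}(v_i)\cup\{v_i\}\subseteq\chi(t)$, which has at most $c$ \red\ vertices by assumption and total size at most $k+1$. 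For the reverse direction, given an elimination ordering $\prec$ with the stated property, define bags $B_i=\{v_i\}\cup N_{G_{\III,i-1}^d}(v_i)$ and arrange them into a tree by making $B_i$ the child of $B_j$ where $j$ is the smallest index greater than $i$ with $v_j\in B_i$ (and making the last bag the root). That this is a tree decomposition of $G^d_\III$ of width $\leq k$ is standard; the load bound is immediate since every bag $B_i$ is exactly one of the sets $N_{G_{\III,i-1}^d}(v_i)\cup\{v_i\}$ whose \red\ count is bounded by~$c$ by hypothesis.

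For part~(2), I would follow the same template, replacing width by cover width and using the hypergraph $H^d_\III$. In the forward direction from a generalized \chtd{c}\ $(T,\chi,\lambda)$, the same leaf-peeling procedure produces an ordering; when $v_i$ is peeled from a leaf bag $t$, the set $N_{G_{\III,i-1}}(v_i)\cup\{v_i\}$ is contained in $\chi(t)$, so $\lambda(t)$ itself witnesses an edge cover of $N_{G_{\III,i-1}}(v_i)\cup\{v_i\}$ of size $\leq k$ with at most $c$ \red\ hyperedges, giving the cover-width property. In the reverse direction, from an ordering $\prec$ with the stated cover property, one constructs bags as in~(1) to obtain a tree decomposition $(T,\chi)$ of $G_\III$, and then attaches to each bag $B_i$ a minimum edge cover $\lambda(i)$ of $B_i$ that uses at most $c$ \red\ hyperedges (the existence of such a cover of size $\leq k$ is exactly the hypothesis on $\prec$). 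Since every hyperedge of $H_\III$ is contained in some bag (because its endpoints induce a clique in $G_\III$ and cliques live inside a single bag of a tree decomposition), $(T,\chi,\lambda)$ is a generalized \chtd{c}\ of width $\leq k$.

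The main obstacle is really the reverse direction of~(2): one must verify that the natural tree layout of the bags $B_i$ remains a valid tree decomposition of $G_\III$ \emph{as a graph} (so that the edge cover step makes sense on $H_\III$), and one must argue that picking a minimum edge cover locally at each bag does not force more than $c$ \red\ hyperedges; both are handled by noting that the cover witnessing the cover-width property at step $i$ can be attached verbatim to $B_i$, so no optimization step is needed. The remaining bookkeeping, including the fact that the leaf-peeling process terminates and respects the subtree property~(P2), is identical to the uncolored case and needs no new ideas.
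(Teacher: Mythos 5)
Your proposal is correct and follows essentially the same route as the paper: leaf-peeling to extract the ordering (with the observation that $N_{G_{i-1}}(v_i)\cup\{v_i\}$ sits inside the leaf bag, so the width, load, and cover bounds transfer verbatim), and the standard bag construction $B_i=N_{G_{i-1}}(v_i)\cup\{v_i\}$ in reverse, with the hypothesized cover attached directly to each new bag in part~(2). The only cosmetic difference is that the paper attaches $B_i$ to an existing node whose bag contains the clique $N_{G_i}(v_i)$ rather than via your explicit smallest-index parent rule; both are the same standard construction.
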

\begin{proof}
  We prove both parts of the statement simultaneously; we mainly describe the proof of part~\textrm{(1)} and while doing so explain the differences to obtain part~\textrm{(2)}.
  First, we show the direction from a tree decomposition (resp.\ hypertree decomposition) to an elimination ordering.
  Let $\III$ be a \CSP\ instance with threshold-$d$ \ctw{c}~$k$ (resp.\ with generalized threshold-$d$ \chtw{c}~$k$).
  Let $(T, \chi)$ be a load-$c$ tree decomposition of width~$k$ for $G^d_\III$.
  For the case of hypertree width, let $(T, \chi, \lambda)$ be a generalized \chtd{c}.
  Let $n := |V(G^d_\III)|$.
  Proceed as follows:
  Put $G'_0 = G^d_\III$ and $\mathcal{T}_0 = (T_0, \chi_0) := (T, \chi)$; respectively, put $\mathcal{T}_0 = (T_0, \chi_0, \lambda_0) := (T, \chi, \lambda)$.
  Then, for each $i = 1, 2, \ldots, n$ construct a graph $G'_i$, a vertex~$v_i$, and a tree decomposition $\mathcal{T}_i = (T_i, \chi_i)$ (resp.\ a generalized hypertree decomposition $\mathcal{T}_i = (T_i, \chi_i, \lambda_i)$) as follows.
  Herein, we maintain the invariant that $\mathcal{T}_i$ is a \ctd{c} of width~$k$ for~$G'_i$ (resp.\ a \chtd{c} of width~$k$ for $H_i$, the hypergraph obtained from $H$ by removing $v_1, \ldots, v_{i - 1}$).  
  \begin{enumerate}
  \item\label{the:orderings:step:leaves}
    Pick an arbitrary leaf~$t$ in~$T_{i - 1}$.
    If each vertex in $\chi_{i - 1}(t)$ occurs in the parent of~$t$ in~$T_{i - 1}$, remove $t$ from $T_{i - 1}$.
    Note that this results in another (generalized hyper-) tree decomposition of at most the same width and load.
    If $t$ was removed, repeat this step.
  \item After Step~\ref{the:orderings:step:leaves}, in the picked leaf~$t \in \mathcal{T}$ there is a vertex~$v \in \chi_{i - 1}(t)$ that occurs in no other bag of~$\mathcal{T}_{i - 1}$.
    Put $v_i := v$.
  \item\label{the:orderings:step:construct} To obtain $G'_{i}$, take $G'_{i - 1}$, remove $v_i$, and make $N_{G'_{i - 1}}(v_i)$ a clique.
    To obtain $\mathcal{T}_{i}$, take $\mathcal{T}_{i - 1}$ and remove $v_i$ from all bags.
    Observe that this maintains our invariant because $N_{G_{i - 1}}(v_i)$ is contained in the bag $\chi_{i - 1}(t)$.
  \end{enumerate}
  We claim that the elimination ordering $\prec$ on $V(G^d_\III)$ induced by $v_1, v_2, \ldots, v_n$ has (cover) width $k$ and for each $v_i$ we have that $N_{G^d_{\III, i -1}}(v_i) \cup \{v_i\}$ contains at most $c$ heavy vertices (resp.\ for each $v_i$ we have that $N_{G^d_{\III, i -1}}(v_i) \cup \{v_i\}$ admits a hyperedge cover in $H_\III$ of size at most~$k$ and with at most $c$ heavy hyperedges).
  Indeed, $G'_i$ is equal to the graph $G_i$ defined by $\prec$.
  In the case of tree decompositions, since $N_{G'_{i - 1}}(v_i) \cup \{v_i\}$ is contained in the bag $\chi_{i - 1}(t)$ in Step~\ref{the:orderings:step:construct} and since $\mathcal{T}_{i - 1}$ is a width-$k$ load-$c$ tree decomposition for~$G'_{i - 1}$, the ordering~$\prec$ has width $k$ and there are at most $c$ heavy vertices in $N_{G_{i - 1}}(v_i) \cup \{v_i\}$.
  Similarly, in the case of hypertree decompositions, since $N_{G'_{i - 1}}(v_i) \cup \{v_i\}$ is contained in the bag $\chi_{i - 1}(t)$ in Step~\ref{the:orderings:step:construct} and since $\mathcal{T}_{i - 1}$ is a width~$k$ load-$c$ generalized hypertree decomposition for~$G'_{i - 1}$, the required cover of $N_{G'_{i - 1}}(v_i)$ is given by $\lambda_{i - 1}(t)$.
  Thus, the ordering~$\prec$ has cover width $k$ and $N_{G'_{i - 1}}(v_i) \cup \{v_i\}$ admits a cover of size at most~$k$ with at most $c$ heavy hyperedges.
  This completes the argument for the direction from tree decompositions to elimination orderings.

  Now let $\prec$ be an elimination ordering for $G^d_\III$ with the properties promised in part~\textrm{(1)} of the theorem (resp.\ in part~\textrm{(2)}).
  Let $v_1, v_2, \ldots, v_n$ be the ordering of vertices of $G_\III$ induced by $\prec$ and let $G_1, G_2, \ldots, G_n$ be the corresponding graphs.
  Let $G_{n + 1}$ be the empty graph and let $\mathcal{T}_{n + 1} = (T_{n + 1}, \chi_{n + 1})$ be a trivial tree decomposition for~$G_{n + 1}$ wherein $T_{n + 1}$ consists of a single vertex~$t$ and the corresponding bag is empty.
  For hypertree decompositions we let $\mathcal{T}_{n + 1} = (T_{n + 1}, \chi_{n + 1}, \lambda_{n + 1})$ be an analogous hypertree decomposition, where additionally $\lambda_{n + 1}(t) = \emptyset$.
  For each $i = n, n - 1, \ldots, 1$ we construct a tree decomposition $\mathcal{T}_i = (T_i, \chi_i)$ for $G_i$ (resp.\ a generalized hypertree decomposition for $H_i$, the hypergraph obtained from $H_\III$ by removing the vertices $v_1, v_2, \ldots, v_{i - 1}$). 
  Herein, we maintain the invariant that $\mathcal{T}_i$ is a (generalized hyper-) tree decomposition for $G_i$ (resp.\ $H_i$) of width at most~$k$ and load at most~$c$.
  At Step~$i$, proceed as follows.
  Take $\mathcal{T}_{i + 1}$ and find a node~$t$ in $T_{i + 1}$ such that the bag $\chi_{i + 1}(t)$ contains $N_{G'_{i}}(v_i)$.
  Such a node exists, because $N_{G'_{i}}(v_i)$ is a clique in~$G_{i + 1}$.
  To obtain $\mathcal{T}_i$ from $\mathcal{T}_{i + 1}$, add a new vertex~$t'$ as a child of $t$ to $T_{i + 1}$ and define $\chi_i(t') := N_{G'_{i}}(v_i) \cup \{v_i\}$.
  Since $\prec$ has width $k$ and there are at most $c$ \red\ vertices in $N_{G'_{i}}(v_i) \cup \{v_i\}$, we have that $\mathcal{T}_i$ is a \ctd{c} of width at most~$k$ for $G_i$.
  For hypertree decompositions, define also $\lambda_{i}(t')$ as the hyperedge cover in $H_\III$ of $N_{G'_{i}}(v_i) \cup \{v_i\}$ that has size at most~$k$ and contains at most~$c$ \red\ hyperedges.
  Since $H_i$ is a subhypergraph of $H_\III$, this cover is also a cover in~$H_i$.
  Thus, $\mathcal{T}_i$ is a \chtd{c} of width at most $k$ for $H_i$.
  This finishes the proof.
\end{proof}

\looseness=-1 A (significantly more complicated) elimination ordering
characterizations of hypertree width has been obtained by
\citet{SchidlerSzeider20,SchidlerSzeider21b}. These, too, can be
translated into characterizations of threshold-$d$ \chtw{c}. However,
experimental evaluations confirmed the expectation that there was no
practical benefit to using hypertree width instead of generalized
hypertree width.

\section{Implemented Algorithms}
\label{sec:practical}

\begin{table}[t]
  \caption{Overview of the algorithms that we use. The acronym \ghtws\ stands for generalized hypertree width.
    We use \emph{small} to indicate that the corresponding quantity is optimized using heuristic methods.
    We use \emph{second} to indicate that the corresponding quantity was optimized as a secondary objective.}
  \label{tab:overview-algos}
    \centering
    \begin{tabular}{@{}lllp{6.5cm}@{}}
      \toprule
      Parameter & Type  & Name     & Description\\
      \midrule
      treewidth & Exact & \algtwxo & Minimum width, disregarding load.\\
                &       & \algtwxw & Minimum width, load second.\\
                &       & \algtwxl & Minimum load, treewidth second.\\
                & Heuristic & \algtwho & Small width, disregarding load.\\
                &           & \algtwhw & Small width, load second.\\
                &            & \algtwhl & Small load, treewidth second.\\
      \ghtws     & Exact     & \alghtxo & Minimum width, disregarding load.\\
                &           & \alghtxw & Minimum width, load second.\\
                &           & \alghtxl & Minimum load, width second.\\
                & Branch \& Bound & \alghtho & Minimum (cover) width for heuristic tree decomposition, disregarding load.\\
                &                  & \alghthw & Minimum (cover) width for heuristic tree decomposition, load second.\\
                &                 & \alghthl & Minimum load for heuristic tree decomposition, (cover) width second.\\
                & Greedy          & \alghtgo & Small width, disregarding load.\\
                &                 & \alghtgw & Small width, load second.\\
    \bottomrule
    \end{tabular}
\end{table}

We use classical exact and heuristic algorithms to compute tree decompositions and generalized hypertree decompositions and adapt them to take the load into account as described below.
We call the algorithms without adaptions (load-) \emph{oblivious}.
These algorithms will bear the suffix \emph{Obl} in the identifiers for the implemented algorithms that we introduce below.
The adapted algorithms either minimize the width of the decomposition first (with heuristic or exact methods) and the load second, represented by suffix \emph{W$\rightarrow$L}, or load first and width second, represented by suffix \emph{L$\rightarrow$W}.
Algorithms for treewidth are prefixed with \emph{TW} and algorithms for (generalized) hypertree width are prefixed with \emph{HT}.
An overview over all algorithms can be found in Table~\ref{tab:overview-algos}.

All our algorithms are based on elimination orderings.
A minimum-width elimination ordering without taking heavy vertices into account for a given graph can be computed
using a SAT encoding~\cite{SamerV09}; below we call this algorithm \algtwxo.
This encoding can be extended to
compute optimal generalized hypertree decompositions, by computing the covers for a tree decomposition of the primal graph~\cite{FichteHecherLodhaSzeider18}
using an SMT encoding, below denoted by \alghtxo.
The SMT approach is highly robust and can be
adapted to also compute threshold-$d$ \ctd{c}s: analogously to the existing cardinality constraints
for bags/covers, we add new constraints that limit the number of heavy
vertices/hyperedges (see Theorem~\ref{the:orderings}).
We use the SMT approach to either compute a decomposition that minimizes the width first and the load second, that is, a decomposition that has minimum width and, among all decompositions with minimum width, minimum load (leading to algorithms \algtwxw\ and \alghtxw).
Or we use the SMT approach to compute a decomposition that minimizes the load first and the width second, that is, a decomposition that has minimum load and, among all decompositions with minimum load, minimum width (leading to algorithms \algtwxl\ and \alghtxl).

Since optimal elimination orderings of graphs are hard to compute, heuristics are
often used. The \emph{min-degree heuristic} constructs an ordering in a
greedy fashion by choosing the $i$-th vertex, $v_i$, in the ordering among the vertices of minimum degree
in the graph $G_{i-1}$ as defined above, and yields decompositions with good width values overall~\cite{BodlaenderK11}.
Below we call this algorithm \algtwho.
We adapted this method into two new heuristics that consider load: \algtwhl\ and \algtwhw.
The former chooses all the heavy vertices first; that is, it selects the $i$-th vertex, $v_i$, in the ordering as an arbitrary heavy vertex in $G_{i - 1}$ of minimum degree or, if $G_{i - 1}$ does not contain any heavy vertices, then it selects $v_i$ to be an arbitrary vertex in $G_{i - 1}$ of minimum degree.
This leads to decompositions with low load but possibly larger width.
The latter heuristic (\algtwhw) maintains a bound $\ell \in \mathbb{N}$ on the target load of the decomposition,
and selects the $i$-th vertex $v_i$ in the ordering as an arbitrary vertex of minimum degree among all vertices in $G_{i - 1}$ that have at most $\ell$ heavy neighbors in $G_{i-1}$; if no such vertex exists, the heuristic restarts with an incremented value of~$\ell$.

Our heuristics for generalized hypertree width follow the general framework introduced by \citet{Dermaku2008}.
In particular, they begin by computing an elimination ordering for the primal graph using the min-degree heuristic,
and then compute an edge cover for each bag.
We use the same approach and employ two different methods to compute the covers: greedy and branch \& bound (b\&b).

The branch \& bound heuristic computes an optimal edge cover for each bag.
Although this approach optimally solves an in general NP-hard problem, it is viable in our data since the resulting \textsc{Set Cover} instances are comparatively easy.
For convenience, let us call the size of the edge cover also its \emph{width} and let the \emph{load} of an edge cover be the number of heavy hyperedges contained in the cover.
Note that minimizing the width (resp.\ load) of the cover corresponds to minimizing the width (resp.\ load) of the resulting decomposition. 
We use three different objectives: minimize the width of the cover only (\alghtho), minimize width first and load second (\alghthw), and minimize load first and width second (\alghthl).

The greedy heuristic is a faster alternative to the branch \& bound approach.
The oblivious algorithm (\alghtgo) always adds the hyperedge that covers the most uncovered vertices of the current bag.
Recall that this results in covers of width at most $(1 + \log n)$ times the minimum width of a cover, where $n$ is the number of vertices (see, e.g., \citet{chvatal_greedy_1979} or Theorem~1.11 by~\citet{williamson_design_2011}).
We take the load into account by using the number of heavy hyperedges as a tie breaker when choosing the hyperedges (\alghtgw).
This corresponds to a width first and load second strategy.

\section{Experiments}\label{sec:exp}
\begin{figure*}[t]
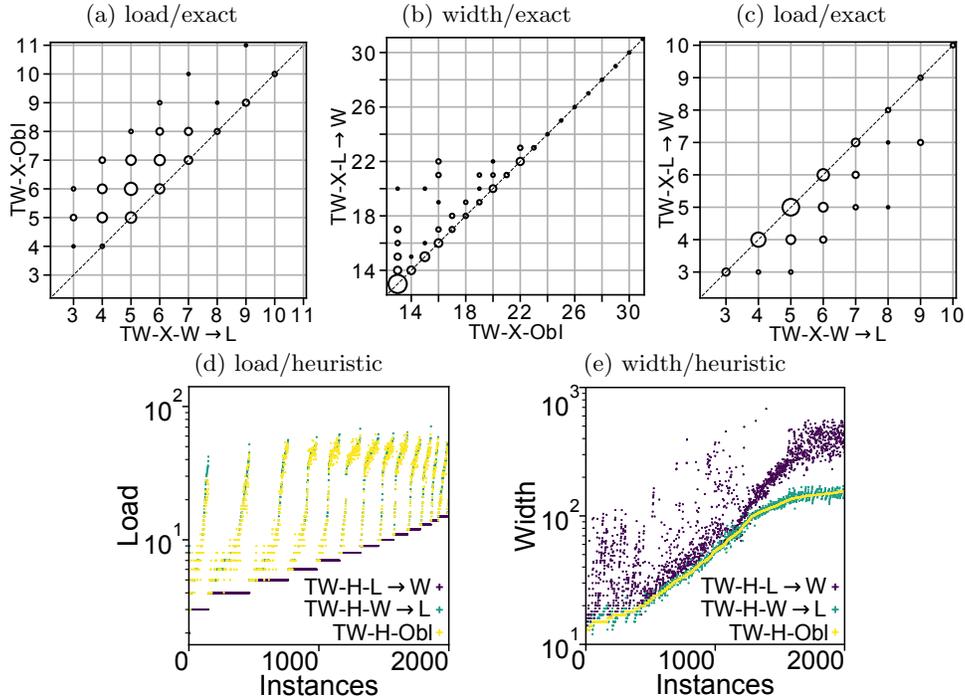
%
 \begin{subfigure}[t]{.250\textwidth}%
   \centering%
     \caption{load/exact}%
   \includegraphics[width=.99\linewidth, trim={0 0 0 0}, clip]{ctw_exact_30_c}%
   \label{fig:exp_ctw_ccomp_a}%
 \end{subfigure}
 \begin{subfigure}[t]{.255\textwidth}%
   \centering%
   \caption{width/exact}%
\includegraphics[width=.99\linewidth, trim={0 0 0 0}, clip]{ctw_exact_30_w}%
   \label{fig:exp_ctw_ccomp_b}%
 \end{subfigure}
 \begin{subfigure}[t]{.250\textwidth}
   \centering%
   \caption{load/exact}%
   \includegraphics[width=.99\linewidth, trim={0 0 0 0}, clip]{ctw_exact_30_cc}%
   \label{fig:exp_ctw_ccomp_c}%
 \end{subfigure}
 \centering
\begin{subfigure}[t]{.310\textwidth}%
   \centering%
     \caption{load/heuristic}%
   \includegraphics[width=0.9\linewidth, trim={0 0 0 0}, clip]{ctw_heur_30_c}%
 	\label{fig:exp_ctw_hcomp_a}%
 \end{subfigure}
 \begin{subfigure}[t]{.310\textwidth}%
   \centering%
     \caption{width/heuristic}%
 \includegraphics[width=0.9\linewidth, trim={0 0 0 0}, clip]{ctw_heur_30_w}%
   \label{fig:exp_ctw_hcomp_b}%
 \end{subfigure}

   \caption{Exact and heuristic computations of tree decompositions: differences in values depending on the optimization strategy.}%
 \label{fig:exp_ctw_ccomp}%
\end{figure*}
In this section we present experimental results using the algorithms discussed in the previous section. 
We were particularly interested in the difference in loads between oblivious (\texttt{Obl}) and width-first load-second (\texttt{W$\shortrightarrow$L}) methods, and the trade-off between width-first (\texttt{W$\shortrightarrow$L}) and load-first (\texttt{L$\shortrightarrow$W}) methods.

\paragraph{Setup}
\looseness=-1
We ran our experiments on a cluster, where each node consists of two Xeon E5-2640 CPUs, each running 10 cores at 2.4\,GHz and 160\,GB memory.
As solvers for the SAT and SMT instances we used \emph{minisat 2.2.0}~\citep{Een04}\footnote{\url{http://minisat.se/}}
and \emph{optimathsat 1.6.2}~\cite{SebastianiT20}\footnote{\url{http://optimathsat.disi.unitn.it/}}. %
The control code and heuristics use \emph{Python 3.8.0}.
Our code is freely available.\footnote{See \url{https://github.com/ASchidler/htdsmt/tree/weighted} and \url{https://github.com/ASchidler/tw-sv}.}
The nodes run \emph{Ubuntu 18.04}.
We used a 8\,GB memory limit and a 2 hour time limit per instance.

\paragraph{Instances}
For threshold-$d$ \ctd{c}s\ we used 2788 instances from the
\emph{twlib}\footnote{\url{http://www.cs.uu.nl/research/projects/treewidthlib/}} benchmark set.
For generalized
threshold-$d$ \chtd{c}s\ we used the 3071 hyperbench~\cite{Fischl19}\footnote{\url{http://hyperbench.dbai.tuwien.ac.at/}}
instances after removing self-loops and subsumed hyperedges. We
created our loaded instances by marking a certain percentage of all
vertices or hyperedges as heavy. We ran experiments for different
ratios, but since the outcomes did not deviate too much, here we only present the
results for a ratio of 30\% heavy vertices/hyperedges (same as
by \citet{KaskGelfandDechter11}). 

Since instances of low width are considered efficiently solvable, our
presentation only focuses on high-width instances. In particular, for
treewidth and generalized hypertree width, we disregarded instances of width below 13 and below 4, respectively.
We were not able to find solutions for all instances; the number of instances with solutions is stated below.

\paragraph{Plots} We use a specific type of scatter plot: the position of the marker shows the pairs of values of the data point, while the size of the marker shows the number of instances for which these values were obtained. The measured quantities are noted in the plot caption.
For example, the data points in Figure~\ref{fig:exp_ctw_ccomp}a are, for each of the solved instances, the pair of loads of the tree decompositions computed by the \algtwxw\ and \algtwxo\ methods from Section~\ref{sec:practical}.

\paragraph{Treewidth}  
Figures~\ref{fig:exp_ctw_ccomp}a to~\ref{fig:exp_ctw_ccomp}c show the results from running the exact algorithms (methods \texttt{TW-X}; 168 instances could be solved within the time limit). It %
shows that even by using \texttt{W$\shortrightarrow$L} methods, we can significantly improve the load without increasing the width. Further improvements in load can be obtained by using \algtwxl, as seen in Figure~\ref{fig:exp_ctw_ccomp}c. In Figure~\ref{fig:exp_ctw_ccomp}b we see that the trade-off (in terms of the width) required to achieve the optimal loads is often very small.

The results are different for heuristic methods.
Figures~\ref{fig:exp_ctw_ccomp}d and~\ref{fig:exp_ctw_ccomp}e show the results from the 2203 instances with high width. While good estimates for load or width are possible, finding good estimates for both at the same time is not possible with the discussed heuristics:
In Figure~\ref{fig:exp_ctw_ccomp}d we see that both the \algtwho\ and \algtwhw\ heuristics mostly fail to find a good estimate for the load.
On the other hand, Figure~\ref{fig:exp_ctw_ccomp}e shows that \algtwhl\ tends to result in decompositions with much larger width than the optimum.
These results suggest that it may be non-trivial to obtain heuristics which provide a good trade-off between load and width. 

\paragraph{Generalized hypertree width}
\begin{figure}[t]
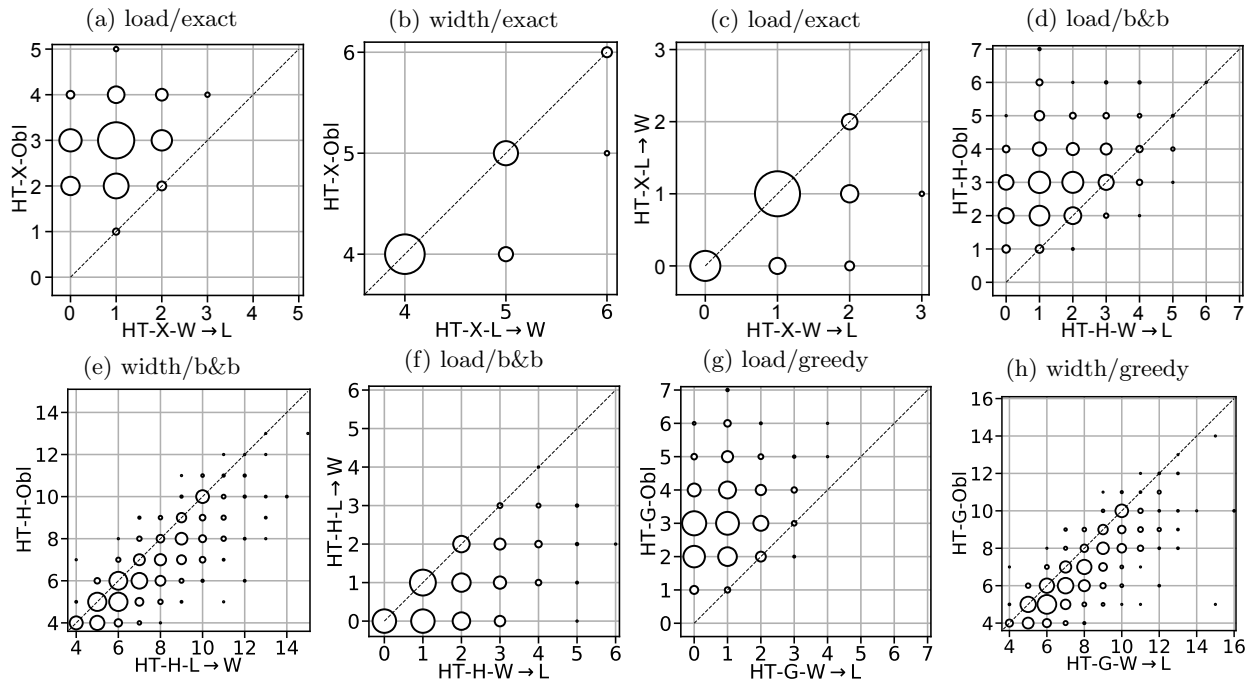

\centering
 \begin{subfigure}{.25\textwidth}
   \centering
     \caption{load/exact}
   \includegraphics[width=0.95\linewidth, trim={0 0 0 0}, clip]{chtw_exact_30_c}
   	\label{fig:exp_chtw_ccomp_a}
 \end{subfigure}%
 \begin{subfigure}{.25\textwidth}
   \centering
   \caption{width/exact}
   \includegraphics[width=0.95\linewidth, trim={0 0 0 0}, clip]{chtw_exact_30_w}
   	\label{fig:exp_chtw_ccomp_b}
 \end{subfigure}%
  \begin{subfigure}{.25\textwidth}
   \centering
       \caption{load/exact}
   \includegraphics[width=0.95\linewidth, trim={0 0 0 0}, clip]{chtw_exact_30_cc}
    \label{fig:exp_chtw_ccomp_c}
 \end{subfigure}%
 \begin{subfigure}{.25\textwidth}
   \centering
     \caption{load/b\&b}
   \label{fig:exp_chtw_bcomp_a}
   \includegraphics[width=0.95\linewidth, trim={0 0 0 0}, clip]{chtw_heur_30_bnb_c}

 \end{subfigure}%

\begin{subfigure}{.25\textwidth}%
     \centering
     \caption{width/b\&b}
     \label{fig:exp_chtw_bcomp_b}
    \includegraphics[width=0.95\linewidth, trim={0 0 0 0}, clip]{chtw_heur_30_bnb_w}%
 \end{subfigure}%
\begin{subfigure}{.25\textwidth}%
     \centering
     \caption{load/b\&b}
     \label{fig:exp_chtw_bcomp_c}
    \includegraphics[width=0.95\linewidth, trim={0 0 0 0}, clip]{chtw_heur_30_bnb_comp_c}%
 \end{subfigure}%
\begin{subfigure}{.25\textwidth}%
     \centering
     \caption{load/greedy}
     \label{fig:exp_chtw_gcomp_a}
    \includegraphics[width=0.95\linewidth, trim={0 0 0 0}, clip]{chtw_greedy_30_c}%
 \end{subfigure}%
\begin{subfigure}{.25\textwidth}%
     \centering
     \caption{width/greedy}
     \label{fig:exp_chtw_gcomp_c}
    \includegraphics[width=0.95\linewidth, trim={0 0 0 0}, clip]{chtw_greedy_30_w}%
 \end{subfigure}%
\caption{Exact and heuristic computations of generalized hypertree decompositions: differences in values depending on the optimization strategy.}
\label{fig:exp_chtw_ccomp}
\end{figure}

Figures~\ref{fig:exp_chtw_ccomp}a to~\ref{fig:exp_chtw_ccomp}c show the results from 259 optimal decompositions computed within the time limit.
The general outlook is the same as for treewidth: Even the \alghtxw\ algorithm %
significantly improves the load
without any trade-off (Figure~\ref{fig:exp_chtw_ccomp}a), and \alghtxl\ can decrease the load even further (Figure~\ref{fig:exp_chtw_ccomp}a) %
while only slightly increasing the generalized hypertree width (Figure~\ref{fig:exp_chtw_ccomp}c).
\looseness=-1
The results obtained by applying the \alghtho\ and \alghthl\ heuristics on the 1624 instances with large width can be seen in Figure~\ref{fig:exp_chtw_ccomp}d. There is a stark contrast to the heuristics used for treewidth: The \alghthw\ heuristic can significantly reduce the load with no trade-off, as the width is guaranteed to be the same (i.e.\ fixed after giving the vertex ordering).
We can lower the load further by optimizing for load first as Figure~\ref{fig:exp_chtw_ccomp}f shows.
Figure~\ref{fig:exp_chtw_ccomp}e shows that the resulting increase in width is about the same as the gain in load.

The results for the greedy heuristic look similar to the branch \& bound results.
Notably, the width is the same for most instances as shown in Figures~\ref{fig:exp_chtw_ccomp}e and h.
The main difference is the slightly increased load as is shown in Figures~\ref{fig:exp_chtw_ccomp}d and \ref{fig:exp_chtw_ccomp}g.
This suggests that the greedy heuristic is a viable choice whenever a slightly higher load is acceptable.

\section{Concluding Remarks}\label{sec:concl}

We have introduced a novel way of refining treewidth and hypertree width via the notion of thresholds, allowing us to lift previous fixed-parameter tractability results for CSP and other problems beyond the reach of classical width parameters. 
Our new parameters
have the advantage over the standard variants of
treewidth and hypertree width that they can take more instance-specific
information into account. A further advantage of our new parameters is
that decompositions that optimize our refined parameter can be used as
the input to existing standard dynamic programming algorithms,
resulting in a potential exponential speedup.
Our empirical findings
show that in realistic scenarios, one can expect that optimizing the
loads requires only minimal overhead while offering huge gains in further processing times.

A natural direction for future research is to explore how the concept of threshold treewidth can be adapted to CSPs in which variables may have infinite domains.
On the one hand, several classes of such CSPs have been shown to be XP-tractable~\cite{huang_decomposition_2013,bodirsky_datalog_2013} and even fixed-parameter tractable~\cite{dabrowski_solving_2021} with respect to the treewidth of the primal graph.
This makes it interesting to attempt to further generalize these tractability results by using the threshold concept.
On the other hand, in the finite-domain regime the potential ``difficulty'' induced by a domain can be captured straightforwardly by its size, however, it seems in the infinite-domain regime the difficulty of a domain has to be captured by different means.
This is indicated when considering Mixed-Integer Linear Programs (MILPs) as CSPs:
Checking the feasibility of MILPs is NP-hard but fixed-parameter tractable with respect to the number of integer variables \cite{lenstra_integer_1983}.
Thus the integer domains introduce the difficulty into checking feasibility rather than the domain size alone.
It thus seems important to capture the structure rather than the size of the domains.
This would need a new approach.

\section*{Acknowledgments}
Andr\'e Schidler and Stefan Szeider acknowledge the support from the
FWF, projects P32441 and W1255,
and from the WWTF, project ICT19-065.
Robert Ganian also acknowledges support from the FWF, notably from projects P31336 and Y1329.
Manuel Sorge acknowledges support by the European Research Council (ERC) under the European Union’s Horizon 2020 research and innovation programme, grant agreement no.~714704 and by the Alexander von Humboldt Foundation.
Main work of Manuel Sorge done while with University of Warsaw.

\begin{center}
  \includegraphics[width=260px]{logos-combined-aij}
\end{center}

\bibliography{literature}

\end{document}